\documentclass[11pt]{amsart}
\usepackage{amsmath,stmaryrd,amsthm, tikz, calc}
\usepackage{color}
\usepackage[margin=1.25in]{geometry}
\parskip=10pt

\date{\today}

\newcommand{\E}{\mathbb{E}}
\newcommand{\hE}{\hat{\mathbb{E}}}

\newcommand{\Cx}{\mathbb{C}}

\renewcommand{\H}{\mathcal{H}}

\newcommand{\cB}{\mathcal{B}}

\newcommand{\cE}{\mathcal{E}}

\newcommand{\cG}{\mathcal{G}}
\newcommand{\cH}{\mathcal{H}}
\newcommand{\cK}{\mathcal{K}}
\newcommand{\cL}{\mathcal{L}}
\newcommand{\cV}{\mathcal{V}}

\newcommand{\spec}{\mathop{\rm spec}}
\renewcommand{\ker}{\mathop{\rm ker}}
\newcommand{\ran}{\mathop{\rm ran}}

\newcommand{\Tr}{{\rm Tr}}

\newcommand{\be}{\begin{equation}}
\newcommand{\ee}{\end{equation}}
\newcommand{\bea}{\begin{eqnarray}}
\newcommand{\eea}{\end{eqnarray}}
\newcommand{\beann}{\begin{eqnarray*}}
\newcommand{\eeann}{\end{eqnarray*}}
\newcommand{\eq}[1]{(\ref{#1})}

\newcommand{\ua}{\uparrow}
\newcommand{\da}{\downarrow}

\newtheorem{thm}{Theorem}[section]
\newtheorem{proposition}[thm]{Proposition}
\newtheorem{Corollary}[thm]{Corollary}

\newtheorem{lemma}[thm]{Lemma}

 \numberwithin{equation}{section}

\renewcommand{\epsilon}{\varepsilon}

\newcommand{\onb}{orthonormal basis}
\newcommand{\ket}[1]{\left\vert{#1}\right\rangle}
\newcommand{\bra}[1]{\left\langle{#1}\right\vert}
\newcommand{\ketbra}[2]{|{#1}\rangle \! \langle {#2}|}
\newcommand{\braket}[2]{\langle {#1}, {#2}\rangle}
\newcommand{\expval}[3]{\langle {#1}\!\mid\!{#2}\! \mid\!{#3}\rangle}
\newcommand{\norm}[1]{\|{#1}\|}

 \def\idty{{\mathchoice {\mathrm{1\mskip-4mu l}} {\mathrm{1\mskip-4mu l}} %
{\mathrm{1\mskip-4.5mu l}} {\mathrm{1\mskip-5mu l}}}}

\begin{document}

\title{A class of two-dimensional AKLT models with a gap}

\author[H. Abdul-Rahman]{Houssam Abdul-Rahman}
\address{Department of Mathematics\\ 
University of Arizona\\
Tucson, AZ 85721, USA}
\email{houssam@math.arizona.edu}
\author[M. Lemm]{Marius Lemm}
\address{Department of Mathematics\\
Harvard University\\
Cambridge, MA 02138, USA}
\email{mlemm@math.harvard.edu}
\author[A. Lucia]{Angelo Lucia}
\address{Walter Burke Institute for Theoretical Physics and Institute for Quantum Information \& Matter \\
California Institute of Technology\\
Pasadena, CA 91125, USA}
\email{alucia@caltech.edu}
\author[B. Nachtergaele]{Bruno Nachtergaele}
\address{Department of Mathematics and Center for Quantum Mathematics and Physics\\
University of California, Davis\\
Davis, CA 95616, USA}
\email{bxn@math.ucdavis.edu}
\author[A. Young]{Amanda Young}
\address{Department of Mathematics\\
University of Arizona\\
Tucson, AZ 85721, USA}
\email{amyoung@math.arizona.edu}

\begin{abstract}
The AKLT spin chain is the prototypical example of a frustration-free quantum spin system with a spectral gap above its ground state. Affleck, Kennedy, Lieb, and Tasaki also conjectured that the two-dimensional version of their model on the hexagonal lattice exhibits a spectral gap. In this paper, we introduce a family of variants of the two-dimensional AKLT model depending on a positive integer $n$, which is defined by decorating the edges of the hexagonal lattice with one-dimensional AKLT spin chains of length $n$. We prove that these decorated models are gapped for all $n\geq 3$.
\end{abstract}

\maketitle

\section{Introduction}\label{sec:introduction}

A central question concerning a quantum spin system is whether it is gapped or gapless. (We say a system is gapped if its Hamiltonian exhibits a uniform spectral gap above the ground state. Otherwise, it is gapless.) The existence of a spectral gap is known to have wide-ranging consequences for the system's low energy physics. For instance, the ground states of gapped Hamiltonians display exponential clustering \cite{HK06,NS06} and, in one dimension, they are known to satisfy various notions of bounded complexity  \cite{AKLV13,ALVV18,H07,LVV15}. Of particular interest are the spin liquid states conjectured to describe a number of interesting 
two- and three-dimensional systems \cite{KRS87,W91,CL98}.
Moreover, with the advent of Hastings' spectral flow \cite{H04} (also called quasi-adiabatic evolution), it has become possible to explore
gapped ground state phases in considerable detail. Different gapped phases are separated from each other
by quantum phase transitions, which are accompanied by a closing of the spectral gap \cite{BMNS12,BHM10}. 
Accordingly, numerous recent works are concerned with the stability of the spectral gap under finite-range perturbations assuming local topological order \cite{BHM10,MZ13,MN18,NSY18}. From these considerations, it would be desirable to have a multitude of gapped Hamiltonians that one can use as starting points for further analysis. However, proving the existence of a spectral gap is a non-trivial mathematical task and there exist only limited tools \cite{FNW92,GM16,KL18,K88,LM18} and only a few special models in which a spectral gap has been rigorously established \cite{AKLT88,BHNY15,BN14,B17,BNY16,BG15,K88,LM18}, particularly in dimensions $\geq 2$. We also mention in passing that deciding whether a general Hamilltonian is gapped or not is known to be undecidable in general, even for reasonable (i.e., translation-invariant and local) Hamiltonians \cite{BCLP18,CPW15}. See also \cite{M17}.

The foundational work in the field was done by Affleck, Kennedy, Lieb and Tasaki (AKLT in the following) in 1988 \cite{AKLT87,AKLT88}. Motivated by a famous conjecture of Haldane \cite{H83a,H83b} that predicts a spectral gap for the one-dimensional integer-spin Heisenberg antiferromagnet, AKLT provided two main contributions that proved seminal in the years to come:

\begin{enumerate}
\item They defined what is now called the one-dimensional AKLT chain: a spin-$1$, isotropic (i.e., $SU(2)$-invariant) antiferromagnet on a one-dimensional chain. They found that it has a unique ground state in the thermodynamic limit and rigorously established a spectral gap.
 \item They defined analogous spin-$z/2$ AKLT models on any $z$-regular bipartite graph. They focused on the hexagonal lattice case (so spin-$3/2$) and derived exponential decay of correlations in the infinite-volume ground state. These facts led AKLT to conjecture that the hexagonal model is also gapped.
\end{enumerate}

We make two further remarks about the AKLT model: (a) The ground state of the hexagonal AKLT model was proved to have a unique \cite{KLT88} thermodynamic limit when the limit is taken with boundary conditions in a certain natural class. It is also known that the ground state in finite-volume with periodic boundary conditions (a finite honeycomb lattice wrapped on a torus) is unique
\cite{KK89}. Kennedy, Lieb, and Tasaki also proved exponential decay of spin-spin correlations, which is significant because it shows that the AKLT antiferromagnet does not exhibit N{\'e}el order, in contrast to its spin-$3/2$ Heisenberg analog, and that there very likely is a spectral gap above the ground state. (b) Historically, the AKLT chain provided the first example of a Hamiltonian whose ground states are matrix product states. This notion has been vastly generalized, starting with \cite{FNW92}, to what are now called tensor network states, and has developed into a central tenet of modern many-body physics \cite{O14,SCP10}.

The conjecture of AKLT that the hexagonal AKLT model (or any other AKLT model in dimension $\geq 2$) is gapped remains open to this day. This is insofar remarkable as all of the AKLT models share a key feature that makes the spectral gap problem in principle more amenable: They are frustration-free, meaning that the global ground state is also locally energy-minimizing.

In the present work, we introduce a novel family of AKLT models on `decorated' hexagonal lattices depending on an integer parameter $n$, and prove that these models are gapped for sufficiently large values of $n$. We call these models the {\em edge-decorated AKLT models} (or {\em decorated AKLT models} for short). The positive integer $n$ will be called the {\em decoration number}, and we explain its role in the next paragraph.

The decorated AKLT model is defined by replacing each edge of the hexagonal lattice with a copy of the one-dimensional AKLT chain of length $n$. Notice that this means that there are two types of vertices in the system: vertices of the hexagonal lattice which have degree $3$ and spin $3/2$, and ``internal vertices'' of the decorated edges which have degree $2$ and spin $1$; see Figure \ref{fig:DecoratedLattice}. The heuristic behind this construction is that the decorated AKLT model incorporates features of the one-dimensional AKLT chain, which is known to be gapped from the work of AKLT. While the decorated model is a bit contrived, it is not unreasonable to expect its ground state(s) to belong to the same gapped phase as those of the original AKLT model on the hexagonal lattice. It seems likely that the same features that generated interest in two-dimensional AKLT models \cite{PDC18,PSPC12,WAR11} are also present for the decorated AKLT model. In particular, we mention \cite{WHR14}, where it was shown that the valence-bond ground states of similarly decorated AKLT models can serve as a universal resource for quantum computation. Going beyond AKLT-type models, an $SU(3)$ spin liquid with $\mathbb{Z}_3$ topological order has recently been proposed in \cite{KVS18}. It too is expected to be gapped.

The strategy that we use to derive the spectral gap of the decorated model follows two main steps. We begin by taking the square of the Hamiltonian, as usual. Step 1 is to employ an inequality due to Fannes, Nachtergaele, and Werner \cite{FNW92} which relates the anticommutator between interaction terms to the angle between ground state projections via a duality argument. This inequality reduces the claim to a sufficiently strong bound on the angle between two ground state projections that overlap along one decorated edge (so mainly along a one-dimensional AKLT chain of length $n$). Step 2 is to establish the desired angle bound by a computation with quasi-one-dimensional matrix product states. With an eye toward possible future applications, we generalize the last computation (of the angle between ground state projections overlapping on a chain) to other models with matrix product ground states.

\section{AKLT models on decorated two-dimensional lattices}\label{sec:models}

For concreteness, we will first discuss in detail an AKLT model on a honeycomb lattice with additional spins on the edges. It will then be straightforward to consider generalizations to which the same arguments apply.

Let $\Gamma$ be the hexagonal lattice and $n\geq 1$. The standard AKLT model on $\Gamma$ \cite{KLT88} has a spin-$3/2$ degree of freedom at each vertex.
For the `decorated' models we introduce here, we add $n$ spin $1$'s along each edge of $\Gamma$ and call the resulting `lattice' $\Gamma^{(n)}$. On this graph with both degree 2 and
degree 3 vertices (see Figure \ref{fig:DecoratedLattice}) we define the AKLT Hamiltonian as usual with nearest neighbor interactions given by the orthogonal projection $P^{(z(e)/2)}$ onto the space of total
spin $z(e)/2$, where for any edge $e$, $z(e)$ is the sum  of the degrees of its two vertices. For the AKLT model on $\Gamma^{(1)}$ all the interaction terms are $P^{(5/2)}$.
For $n\geq 2$, the model also has interactions $P^{(2)}$ between neighboring spin $1$'s. This class of models is a special case of the general class of AKLT-type models studied in \cite{KK89}. There, it is shown that they are frustration-free and also that
the ground state is non-degenerate if the model is considered with periodic boundary conditions. The frustration-freeness is easily proved in the same way as for the original AKLT models by using the Valence Bond Solid construction of a non-zero vector in the kernel of the manifestly non-negative Hamiltonian.

\begin{figure}
\begin{tikzpicture}
\newcommand{\hexcoord}[2]
{[shift=(0:#1),shift=(60:#1),shift=(0:#2),shift=(-60:#2)]}
\foreach \x in {0,...,1}
\foreach \y in {0,1}{
\draw\hexcoord{\x}{\y}
(0:1)--(60:1)--(120:1)--(180:1)--(-120:1)--(-60:1)--cycle;
\foreach \n in {0, 60, 120, 180, -120, -60}{
\draw\hexcoord{\x}{\y} (\n:1)--(\n:1.3);
\draw[fill=black]\hexcoord{\x}{\y} (\n:1) circle (2.5pt);
\draw[fill=black]\hexcoord{\x}{\y} (\n+20:{sqrt((2*cos(\n)+cos(\n+60))^2/9+(2*sin(\n)+sin(\n+60))^2/9)}) circle (1.2pt);
\draw[fill=black]\hexcoord{\x}{\y} (\n+40:{sqrt((cos(\n)+2*cos(\n+60))^2/9+(sin(\n)+2*sin(\n+60))^2/9)}) circle (1.2pt);
}
}
\end{tikzpicture}
\caption{The decorated hexagonal lattice for $n=2$.}	
\label{fig:DecoratedLattice}
\end{figure}

Let $\Lambda$ be a suitable finite subset of $\Gamma$ considered with periodic boundary conditions and denote by $\Lambda^{(n)}$ its decoration as above.
Let $\cE_{\Lambda^{(n)}}$ denote the set of edges of the decorated graph and consider
the Hamiltonian
\be
H_{\Lambda^{(n)}} = \sum_{e\in \cE_{\Lambda^{(n)}}} P^{(z(e)/2)}_e.
\ee
We claim that for $n$ large enough there is $\gamma_n>0$ such that for all nice $\Lambda$ the gap of $H_{\Lambda^{(n)}} $ is lower bounded by $\gamma_n$. In this context, `nice' $\Lambda$, means that one can consider the decorated graph $\Lambda^{(n)}$ as a union of overlapping 
subgraphs isomorphic to the H-shaped graph shown in Figure \ref{fig:AKLTG}. For concreteness, we will explicitly treat the case of periodic 
boundary conditions (a finite rectangle cut out of the hexagonal lattice and wrapped around a torus). Other shapes can be considered without any significant change in the arguments.

To prove the claim we consider a comparable model defined as follows. For each vertex $v$ in $\Lambda$ and the three edges meeting in $v$, we consider
the subsystem consisting of the spin $3/2$ at $v$ and the $3n$ spin $1$'s residing on the three edges.  Let $Y_v$ denote the corresponding set of $3n+1$
vertices in $\Lambda^{(n)}$ and define $h_v$ to be the AKLT Hamiltonian on $Y_v$. Then
\be
H_{\Lambda^{(n)}}\leq \sum_{v\in\Lambda} h_v \leq 2H_{\Lambda^{(n)}}.
\ee
To simplify things further, define $P_v$ to be the orthogonal projection onto $\ran h_v$.  It is a straightforward calculation
to check that $\ker P_v=\ker h_v$ is $8$-dimensional for $n=1$, and hence for all larger values of $n$ as well.
We will estimate the gap of
\be
\tilde H_{\Lambda^{(n)}} = \sum_{v\in\Lambda} P_v,
\ee
which is also comparable to $H_{\Lambda^{(n)}} $:
\be\label{comparable}
\frac{1}{2}\gamma_Y\tilde H_{\Lambda^{(n)}} \leq H_{\Lambda^{(n)}} \leq \Vert h_Y\Vert \tilde H_{\Lambda^{(n)}} ,
\ee
with $\gamma_Y>0$. This inequality implies that $\tilde{H}_{\Lambda^{(n)}}$ is also frustration-free since $\ker(\tilde{H}_{\Lambda^{(n)}}) =\ker(H_{\Lambda^{(n)}})\neq \{0\}$. Therefore, it suffices to study the gap of $\tilde H_{\Lambda^{(n)}} $. 

We will obtain a lower bound for the gap of $ \tilde H_{\Lambda^{(n)}}$, by finding a constant  $\gamma>0$ satisfying
\be\label{alaknabe}
(\tilde H_{\Lambda^{(n)}})^2 = \tilde H_{\Lambda^{(n)}}  + \sum_{\{v,w\}\subset \Lambda, v\neq w} (P_vP_w+ P_w P_v )\geq  \gamma \tilde H_{\Lambda^{(n)}} .
\ee
If $v$ and $w$ are {\em not} nearest neighbors, $P_v$ and $P_w$ commute and $P_vP_w+ P_w P_v \geq 0$.
Therefore, in the second term we can drop all contributions from such pairs.
For the nearest neighbor pairs $(v,w)$, instead of the combinatorial style argument in \cite{K88},
which requires good estimates of a specific finite-volume gap, we apply the following inequality for a pair orthogonal projections $E$ and $F$ (for a proof see \cite[Lemma 6.3]{FNW92}):
\be\label{lowerbound}
EF + FE \geq -\Vert EF - E\wedge F\Vert (E+F).
\ee
Here,  $E\wedge F$ is the orthogonal projection onto $\ran E \cap \ran F$. We need this for $E=P_{v}, F=P_{w}$, with $v,w$ nearest neighbors in $\Lambda$.

The norm in \eq{lowerbound}  remains unchanged if we replace $E$ and $F$ by $\idty - E$ and $\idty -F$, which, in our application, are the corresponding ground state projections.
All nearest neighbor pairs are equivalent in this consideration and we denote $\idty- P_v=P_{\Yright}, \idty- P_w=P_{\Yleft}$, and $P_\Yright\wedge P_\Yleft=P_{\Yright\Yleft}$.
Define
\be\label{def_epsilon_n}
\epsilon_n=\Vert P_\Yright P_\Yleft - P_{\Yright\Yleft}\Vert.
\ee
Since every $v\in\Lambda$ has 3 nearest neighbors, we have shown
\be
(\tilde H_{\Lambda^{(n)}})^2 \geq (1-3\epsilon_n) \tilde H_{\Lambda^{(n)}}.
\ee
Therefore,
\be \label{gap_cond}
\mbox{gap} (H_{\Lambda^{(n)}}) \geq \frac{1}{2} \gamma_Y (1-3\epsilon_n).
\ee
It remains to show that $ \epsilon_n < 1/3$.

\begin{proposition}\label{prop:epsilonn}
Let
$$
A_n = \frac{4}{3^{n} \left( 1- \frac{8(1+3^{-2n-1})}{3^n(1-3^{-2n})}\right)}.
$$
Then, for all $n\geq 3$, the quantity $\epsilon_n$ defined in \eq{def_epsilon_n} satisfies
\be
\epsilon_n \leq A_n + A_n^2 \left(1+ \frac{8(1+3^{-2n-1})^2}{3^n(1-3^{-2n})^2}\right) < 1/3.
\ee
\end{proposition}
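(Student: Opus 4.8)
The plan is to carry out ``Step 2'' from the introduction: to turn the estimate on $\epsilon_n$ into a genuinely one-dimensional computation with matrix product states. The key structural fact is that the regions $Y_v$ and $Y_w$ overlap exactly along the decorated edge joining $v$ and $w$, that is, along a spin-$1$ AKLT chain of $n$ sites; the obstruction to the two ground-state projections $P_{\Yright},P_{\Yleft}$ having the ``expected'' intersection $P_{\Yright\Yleft}$ is therefore transmitted only through that chain, and should be suppressed at the rate $3^{-n}$ fixed by the spectral gap of the AKLT transfer operator.

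\emph{Step A (valence-bond-solid description of the three projections).} I would decompose $\H=\H_L\otimes\H_C\otimes\H_R$, where $\H_C\cong(\Cx^3)^{\otimes n}$ is the Hilbert space of the shared decorated edge, $\H_L$ carries the spin-$3/2$ at $v$ together with the two other arm chains at $v$, and $\H_R$ is the analogue at $w$, so that $P_{\Yright}=\idty-P_v$ is supported on $\H_L\otimes\H_C$ and $P_{\Yleft}=\idty-P_w$ on $\H_C\otimes\H_R$. The valence-bond-solid construction then provides an explicit injective map $V_v\colon(\Cx^2)^{\otimes3}\to\H_L\otimes\H_C$ with $\ran V_v=\ker h_v$ (an $8$-dimensional space, as in the excerpt); injectivity for $n\ge2$ follows from that of the length-$n$ AKLT MPS map $B_n\colon\Cx^2\otimes\Cx^2\to\H_C$. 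Concretely one places a virtual spin-$1/2$ at each of the three free ends of $Y_v$ --- one of them the $w$-endpoint of the crossbar chain --- and contracts a singlet along every bond, so that $V_v$ factors through a copy of $B_n$ on the crossbar and a copy of $B_n$ on each of the two arms. In the same way $\ker h_w=\ran V_w$ for a map $V_w\colon(\Cx^2)^{\otimes3}\to\H_C\otimes\H_R$, and $\ran P_{\Yright\Yleft}=\ker h_v\cap\ker h_w=\ran W$ for a map $W\colon(\Cx^2)^{\otimes4}\to\H$ that threads a single copy of $B_n$ through the crossbar. Thus all the transfer-operator dependence of the problem is concentrated in the Gram matrix $B_n^*B_n\in\cL(\Cx^2\otimes\Cx^2)$ (which also governs each arm), while every other ingredient is a fixed $SU(2)$-equivariant matrix on a small virtual space.

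\emph{Step B (reduction of $\epsilon_n$ to a transfer-operator estimate).} Since $P_{\Yright\Yleft}\le P_{\Yright}$ and $P_{\Yright\Yleft}\le P_{\Yleft}$ one has $P_{\Yright}P_{\Yleft}-P_{\Yright\Yleft}=P_{\Yright}(P_{\Yleft}-P_{\Yright\Yleft})$, and as $P_{\Yleft}-P_{\Yright\Yleft}$ is the orthogonal projection onto $\ran P_{\Yleft}\ominus\ran P_{\Yright\Yleft}$ this gives
\[
\epsilon_n=\sup\bigl\{\,|\braket{\psi}{\varphi}|\ :\ \psi\in\ran P_{\Yright}\ominus\ran P_{\Yright\Yleft},\ \varphi\in\ran P_{\Yleft}\ominus\ran P_{\Yright\Yleft},\ \norm{\psi}=\norm{\varphi}=1\,\bigr\},
\]
the cosine of the Friedrichs angle between $\ran P_{\Yright}$ and $\ran P_{\Yleft}$. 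Writing the orthogonal projections as $P_{\Yright}=V_v(V_v^*V_v)^{-1}V_v^*$, $P_{\Yleft}=V_w(V_w^*V_w)^{-1}V_w^*$ and $P_{\Yright\Yleft}=W(W^*W)^{-1}W^*$, the supremum becomes a computation inside the (at most $16$-dimensional) virtual spaces, expressed entirely through $B_n^*B_n$, its inverse, and the fixed vertex matrices. The spectral input is classical: by $SU(2)$-invariance the spin-$1$ AKLT transfer operator acts as the identity on the trivial representation $\Cx\,\idty$ and as $-\tfrac13$ on the spin-$1$ representation, so in the natural normalization $B_n^*B_n$ differs from its $n\to\infty$ limit by a term of size $3^{-n}$, and the inverse Gram matrices $(V_v^*V_v)^{-1}$, etc., are given by Neumann series whose remainders are of the same order and which converge once a correction of size $\sim8\cdot3^{-n}$, caused by the non-orthogonality of the valence-bond-solid states on the crossbar and on the four arm chains, is $<1$. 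Inserting these expansions into the supremum and bounding the norms of all the fixed matrices produces $\epsilon_n\le A_n+A_n^2\bigl(1+\tfrac{8(1+3^{-2n-1})^2}{3^n(1-3^{-2n})^2}\bigr)$: the leading term $A_n\sim4\cdot3^{-n}$ records the $3^{-n}$ decay of the transfer operator across the $n$-site chain, the denominator $1-\tfrac{8(1+3^{-2n-1})}{3^n(1-3^{-2n})}$ of $A_n$ is exactly the Neumann-series convergence factor, the $A_n^2$-term collects the second-order corrections, and the geometric factors $\tfrac1{1-3^{-2n}}$ arise from round trips of a defect through the crossbar.

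\emph{Main obstacle and conclusion.} The substantive work --- and the main obstacle --- is all in Step B: performing the valence-bond-solid bookkeeping without error, identifying precisely which fixed finite matrices occur, and bounding their norms sharply enough to extract the explicit constants $4$ and $8$. This is delicate exactly because the valence-bond-solid states are not orthonormal, which forces one to invert the $n$-dependent Gram matrices and to keep careful track of the resulting geometric factors; getting a clean closed form (rather than an abstract $O(3^{-n})$) requires all these estimates to be tight. Granting the displayed bound on $\epsilon_n$, the inequality $\epsilon_n<1/3$ for $n\ge3$ is then a direct numerical verification: the right-hand side is decreasing in $n$ for $n\ge3$ (and $A_n\to0$), so it is enough to note $A_3\approx0.21$, whence $\epsilon_3\le A_3+A_3^2(1+\cdots)\approx0.27<1/3$; one also checks that the bound genuinely breaks down for $n\le2$ --- there $A_n$ is negative or larger than $1/3$ --- which is precisely why the decoration number must satisfy $n\ge3$.
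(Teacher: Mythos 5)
Your framing is the same as the paper's: decompose the overlap region as $G_L-C_n-G_R$, describe the three ground spaces by (quasi-one-dimensional) VBS/MPS tensors, characterize $\epsilon_n$ as the supremum of $|\langle\phi,\psi\rangle|/\|\phi\|\|\psi\|$ over $\phi\in\ran P_{\Yright}$, $\psi\in\ran P_{\Yleft}$ with $\phi,\psi\perp\ran P_{\Yright\Yleft}$, and feed in the spectral decomposition of the AKLT transfer operator ($1$ on the trivial sector, $-1/3$ on the spin-$1$ sector) together with Gram-matrix corrections for the non-orthonormal valence-bond states. Your numerical endgame ($A_3\approx0.21$, bound $\approx0.268<1/3$, failure for $n\le2$) also agrees with the paper.

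The genuine gap is that the quantitative core of the proposition is asserted rather than derived, and you say so yourself (``the substantive work --- and the main obstacle --- is all in Step B''). Concretely, two things are missing. First, the mechanism by which the constraint $\phi,\psi\perp\cG_G$ is converted into smallness is never identified; without it the supremum is $1$ (attained on $\cG_G$), so this is where the entire content lies. In the paper this is a duality step: one shows $|\langle\phi,\psi\rangle-\langle C_\phi,D_\psi\rangle_{\cK_G}|\le \frac{b(n)}{\sqrt{1-b_{LR}(n)}}\|\phi\|\|\psi\|$ for auxiliary-space matrices $C_\phi,D_\psi$ built from the MPS expansions of $\phi$ and $\psi$, and then uses $\langle\phi,\Gamma_G(X)\rangle=\langle\Gamma_G(X),\psi\rangle=0$ for \emph{all} $X\in\cK_G$, together with the identities $C_\xi=XQ_R$, $D_\xi=Q_LX$ for $\xi=\Gamma_G(X)$, to conclude $\|C_\phi Q_R^{-1}\|_G\le\delta(n)\|\phi\|$ and $\|Q_L^{-1}D_\psi\|_G\le\delta(n)\|\psi\|$; this is exactly what produces the second-order structure $A_n+A_n^2(1+b_G(n))$ rather than an $O(1)$ or first-order bound. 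A Neumann-series expansion of the inverse Gram matrices $(V^*V)^{-1}$ alone, as you propose, does not yield this: the orthogonality must enter as a variational statement over the whole auxiliary space, not merely as a restriction of the domain of the supremum. Second, the explicit constants are not actually computed: the bound requires $a(n)=3^{-n}$, $\Tr\rho^{-1}=4$, $\|\E_L\|=\|\E_R\|=1+3^{-2n-1}$, $q_L=1-3^{-2n}$, $q_R=\rho_{\min}q_L$, and the verification (via the explicit $W^L_k$, $W^R_k$ tensors) that the boundary maps $\Gamma_{G_L-C_n}$, $\Gamma_{C_n-G_R}$, $\Gamma_G$ are injective with these Gram bounds for $n\ge2$; your proposal gestures at where $4$ and $8$ ``should'' come from but does not establish them. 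As it stands the argument is a correct plan with the decisive lemma (the orthogonality/duality estimate) and the model-specific computations left unproved.
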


The proof of this proposition is contained in the next two sections. As a consequence, we can state the following theorem.

\begin{thm}\label{thm:DecSpecGap}
The spectral gap above the ground state of the AKLT model on the edge-decorated honeycomb lattice with $n\geq 3$ has a strictly positive lower bound uniformly for all finite volumes with periodic boundary conditions.
\end{thm}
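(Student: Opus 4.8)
The plan is to collect the operator bounds assembled in Section~\ref{sec:models}; the only substantive analytic step, the estimate $\epsilon_n<1/3$, is precisely Proposition~\ref{prop:epsilonn}, proved in the next two sections. So fix $n\geq 3$ and a finite volume $\Lambda$ — a rectangle cut from the hexagonal lattice and wrapped on a torus. First I would record the elementary geometric fact that the decorated graph $\Lambda^{(n)}$ is the union of the subsystems $Y_v$, $v\in\Lambda$, with the $n$ internal spins of each decorated edge shared by exactly the two subsystems $Y_v,Y_w$ attached to its endpoints. This is what underlies the comparison $H_{\Lambda^{(n)}}\leq\sum_v h_v\leq 2H_{\Lambda^{(n)}}$ and, after replacing each $h_v$ by the projection $P_v$ onto its range, the two-sided bound \eq{comparable}. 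The key point for uniformity is that the constants $\gamma_Y$ and $\|h_Y\|$ are determined by the single AKLT operator on one abstract copy of $Y_v$ — all the $Y_v$ being mutually isomorphic — so they do not depend on $\Lambda$.

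Next I would use the elementary spectral fact that a bounded operator $A\geq 0$ satisfying $A^2\geq\gamma A$ has $\spec(A)\subseteq\{0\}\cup[\gamma,\infty)$, so that if in addition $\ker A\neq\{0\}$ then $\mbox{gap}(A)\geq\gamma$. Applied to $A=\tilde H_{\Lambda^{(n)}}$ — which is frustration-free because $\ker\tilde H_{\Lambda^{(n)}}=\ker H_{\Lambda^{(n)}}\neq\{0\}$ by \eq{comparable} — together with the squaring estimate $(\tilde H_{\Lambda^{(n)}})^2\geq(1-3\epsilon_n)\tilde H_{\Lambda^{(n)}}$ coming from the Fannes--Nachtergaele--Werner inequality \eq{lowerbound}, this gives $\mbox{gap}(\tilde H_{\Lambda^{(n)}})\geq 1-3\epsilon_n$. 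Feeding this into the lower comparison $\tfrac12\gamma_Y\tilde H_{\Lambda^{(n)}}\leq H_{\Lambda^{(n)}}$, whose two operators share the same kernel, yields $\mbox{gap}(H_{\Lambda^{(n)}})\geq\tfrac12\gamma_Y(1-3\epsilon_n)$, which is \eq{gap_cond}.

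Finally I would invoke Proposition~\ref{prop:epsilonn}: for every $n\geq 3$ it gives $\epsilon_n<1/3$, hence $\gamma_n:=\tfrac12\gamma_Y(1-3\epsilon_n)>0$; since neither $\gamma_Y$ nor the bound on $\epsilon_n$ involves $\Lambda$, the inequality $\mbox{gap}(H_{\Lambda^{(n)}})\geq\gamma_n$ holds uniformly over all admissible finite volumes, which is the assertion of the theorem. The single real obstacle in this scheme is Proposition~\ref{prop:epsilonn} itself: bounding $\epsilon_n$ amounts to controlling the norm of the overlap between two ground-state projections that coincide along one decorated edge — essentially a length-$n$ AKLT chain — and this is exactly where the decoration number must be taken large; it is handled in the following two sections by an explicit quasi-one-dimensional matrix-product-state computation. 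Everything else above is soft.
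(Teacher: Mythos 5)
Your proposal is correct and follows essentially the same route as the paper: the comparison \eqref{comparable}, the squaring argument with the Fannes--Nachtergaele--Werner inequality \eqref{lowerbound} giving $(\tilde H_{\Lambda^{(n)}})^2\geq(1-3\epsilon_n)\tilde H_{\Lambda^{(n)}}$ and hence \eqref{gap_cond}, and then Proposition~\ref{prop:epsilonn} to get $\epsilon_n<1/3$ for $n\geq 3$, with uniformity in $\Lambda$ coming from the volume-independence of $\gamma_Y$ and of the bound on $\epsilon_n$. You also correctly identify that the only substantive analytic content lies in Proposition~\ref{prop:epsilonn}, which the paper establishes via the MPS estimates of Sections~\ref{sec:bound} and \ref{sec:AKLT_Est}.
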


\section{Ground state projections for subgraphs overlapping on a chain}\label{sec:bound}

To prove Proposition \ref{prop:epsilonn} we will formulate the quantity $\epsilon_n$ in terms of the ground state projections of quasi-one-dimensional Matrix Product States (MPS).
A rather straightforward generalization of  the arguments in  \cite{FNW92} to MPS systems with matrices that may vary from site to site and are not necessarily square, will
then yield the desired estimate. With an eye toward possible further generalizations and applications, we will estimate $\epsilon_n$ in a slightly more general setting, which we now introduce.
The role of the finite subgraph $\Yright\Yleft$ of the decorated honeycomb lattice will be played a finite graph $G$ with a certain structure and we will make a number of assumptions on
the ground states of a frustration free Hamiltonian on $G$. In Section \ref{sec:AKLT_Est} we will show that these assumptions are satisfied for the AKLT model on the decorated honeycomb lattice.

\subsection{Assumptions on the tensor network states for the local patch $G$.}\label{sec:assumptions}

Consider a finite graph $G=(\cV,\cE)$ of the form $G_L-C_n-G_R$, meaning there are finite graphs
$G_L$ and $G_R$, $C_n=[v_1,v_n]$ is a chain of $n$ vertices, and there exist $v_L\in G_L, v_R\in G_R$ such that $\cV$ is the disjoint union of the vertices of $G_L, C_n$ and $G_R$ and $\cE$
consists of the edges of $G_L, C_n$ and $G_R$ together with $(v_L,v_1)$ and $(v_n,v_R)$, see Figure~\ref{fig:AKLTG}.
\begin{figure}
	\begin{tikzpicture}
	\node[yshift = .65cm] at (0:1.25){$C_n$};
	\node at (0:-1.25){$G_L$};
	\node at (0:3.75){$G_R$};
	\draw[fill=black] (0:0) circle (3pt) node [below right] {$v_L$};
	\draw[fill=black] (0:2.5) circle (3pt) node [below left] {$v_R$};
	\foreach \n in {0,120,240}{
		\draw(\n:0)--(\n:2.5);
		\node[rotate = \n, yshift=3pt] at (\n:1.5) {$\ldots$};
		\foreach \m in {.5, 1, 2}
		\draw[fill=black] (\n:\m) circle (1.5pt);
	}
	\foreach \n in {60,-60}
	{
	\draw[xshift=2.5cm] (\n:0)--(\n:2.5);	
	\node[xshift=2.5cm, rotate = \n, yshift=3pt] at (\n:1.5) {$\ldots$};
	\foreach \m in {.5, 1, 2}
	\draw[fill=black, xshift=2.5cm] (\n:\m) circle (1.5pt);
}
	\end{tikzpicture}
	\caption{The graph $G$ for the decorated AKLT model.}
	\label{fig:AKLTG}
\end{figure}
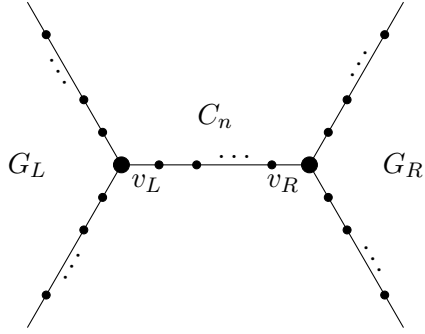
We consider a frustration-free Hamiltonian on $G$ of the following form:
\be\label{HG}
H_G = H_{G_L} + H_{C_n} + H_{G_R} + h_{v_L,v_1} + h_{v_n,v_R},
\ee
where $H_{G_L}$, $H_{G_R}$, and $H_{C_n}$ satisfy the following conditions. First, we assume $H_{C_n}$ has ground states given by a translation invariant MPS with a primitive transfer matrix $\E$.
Let $D$ denote the bond dimension of this MPS and pick an \onb\ $\{\ket{i}\mid 1\leq i \leq d\}$ for the physical degree of freedom at each of the $n$ sites of $C_n$. In this case the
transfer matrix $\E$, in isometric form, is given in terms of $d$ $D\times D$ matrices $V_i$:
\be
\E(B) = \sum_{i = 1}^d V_i^* B V_i, B\in M_D, \mbox{ with } \E(\idty) = \idty.
\ee
The primitivity assumption implies that there exists a non-singular density matrix $\rho \in M_D$ satisfying $\E^{t}(\rho) = \rho$, and constants $C\geq 0$ and $\lambda \in [0,1)$, such that
\be\label{an}
a(n) : = \left\| \E^n - \ketbra{\idty}{\rho} \right\| \leq C \lambda^n.
\ee
For the spin-1 AKLT chain one has this estimate with $C=1$ and $\lambda =1/3$, and both constants are sharp in that case.

We now turn to the assumptions we make on $H_{G_L}$ and $H_{G_R}$. For $\sharp\in\{L,R\}$, let $\cH_\sharp$ be the Hilbert space associated with the system on $G_\sharp$.
We assume that the ground states of $H_{G_\sharp}$ are given by a tensor $T^\sharp$ as follows. We will consider $T^L$ as a set
of $\dim\cH_L$ $D\times D_L$ matrices labeled by an \onb\ $\{ \ket{r}_L\}$ of $\cH_L$  and $T^R$ as a set of $\dim\cH_R$ $D_R\times D$ matrices labeled by an \onb\ $\{ \ket{r}_R\}$ of $\cH_R$.
The physical Hilbert space for the system on $G$ is $\cH_G = \cH_{G_L} \otimes \cH_{C_n}\otimes \cH_{G_R}$, and the auxiliary space, which parametrizes the ground states,
is $\cK_G=\Cx^{D_L}\otimes\Cx^{D_R}$, which we identify
with $\cL(\Cx^{D_R},\Cx^{D_L})$ and equip with the standard inner product $\langle\cdot,\cdot\rangle_{\cK_G}$. 
The map $\Gamma_G: \cK_G \to \cH_G$ is then given by
\be\label{GammaG}
\Gamma_G(B) =  \sum_{l,i_1,\ldots,i_n,r}\Tr [B T^R_r  V_{i_n}\cdots V_{i_1} T^L_l]  \ket{l}_L\otimes \ket{i_1,\ldots,i_n}\otimes \ket{r}_R, B\in \cK_G.
\ee
We assume that $H_G$ is frustration free, which means that all terms in \eq{HG} are non-negative and $\ker H_G \neq \{0\}$ and, in addition, we assume
$\ker H_G = \ran \Gamma_G$.

We also introduce the transfer matrices associated with $G_L$ and $G_R$,
$\E_L:M_D\to M_{D_L}$ and $\E_R:M_{D_R}\to M_{D}$, as follows:
\be
\E_L(B) =  \sum_l (T^L_l)^* B  T^L_l,\quad \E_R(B) = \sum_r   (T^R_r)^* B T^R_r,
\label{transfer-matrix-L-R}
\ee
and define
\be
Q_L = \E_L(\idty),\quad  Q_R = \E_R^t(\rho)
\ee
where $\E_R^t(B) =  \sum_r   T^R_r B (T^R_r)^*.$ We assume that $Q_L$ and $Q_R$ are non-singular.

For $\Lambda \in \{G_L, C_n, G_R, G_L-C_n, C_n-G_R\}$, considered as subsets of $G$, let $\cH_\Lambda$ and $\cK_\Lambda$ denote the corresponding physical and auxiliary Hilbert spaces, respectively, and define the corresponding maps $\Gamma_\Lambda : \cK_\Lambda \to\cH_\Lambda$ in the obvious way. These maps are of the same form as $\Gamma_G$ in \eq{GammaG};
if one or both parts described by $G_L$ or $G_R$ are absent, the absent degrees of freedom associated with $G_\sharp$ correspond to taking $\cH_{G_\sharp} =\Cx$, $H_{G_\sharp} =0$, and $T^\sharp \equiv 1$.

We assume that the maps $\Gamma_\Lambda$ are injective, meaning
\be\label{Gamma_injective}
\dim \ran\Gamma_\Lambda  =\dim \cK_\Lambda,
\ee
and $ \ran\Gamma_\Lambda = \ker H_\Lambda$.

\subsection{General estimate of $\epsilon_n$.}

Let $\cG_\Lambda$ denote $\ker H_\Lambda$, and $P_\Lambda$  the orthogonal projection onto $\cG_\Lambda$.
Our next goal is to estimate $\epsilon_n = \Vert P_{G_L-C_n}P_{C_n-G_R}-P_G\Vert$. It is easy to see that
$\epsilon_n$ is explicitly given by the following expression:
\be\label{innerproductformula}
\epsilon_n = \sup \left\{\left. \frac{\vert \braket{\phi}{\psi}\vert}{\Vert \phi\Vert \Vert \psi\Vert}\ \right\vert
\phi\in \cG_{G_L-C_n}\otimes \cH_{G_R},\ \psi \in \cH_{G_L}\otimes \cG_{C_n-G_R},\ \phi,\psi \perp \cG_G,\  \phi,\psi\neq 0\right\}.
\ee

We will derive an estimate of the type of inner products that appear in \eq{innerproductformula}, but first recall some basic properties of MPS.

Since $\rho$ is non-singular and positive, it defines an inner product on $M_D$ by
\begin{equation}
\langle A, B \rangle_{\rho} = \Tr \rho A^*B \quad \mbox{for all } A, B \in M_D \, ,
\end{equation}
and let $\Vert\cdot \Vert_\rho$ denote the corresponding norm. We will also let $\rho_{\rm min}$ denote the smallest
eigenvalue of $\rho$, which is positive by assumption. It follows that the norm $\Vert\cdot \Vert_\rho$ is equivalent to the
Hilbert-Schmidt norm on $M_D$, which is given by $\Vert \cdot\Vert_2=\sqrt{\Tr A^*A}$. Explicitly:
\begin{equation} \label{norms}
\| A \|_2 \leq  \frac{1}{\sqrt{\rho_{\rm min}}} \| A \|_{\rho},\quad A \in M_D \, .
\end{equation}

The map $\Gamma_{C_n}: M_D\to \H_{C_n}$ is explicitly given by
\be
\Gamma_{C_n}(B) =\sum_{i_1,\ldots,i_n}  \Tr [B V_{i_n}\cdots V_{i_1}]\ket{i_1,\ldots,i_n}, \ B\in M_D.
\label{GammaCn}\ee

\begin{lemma}[\expandafter{\cite[Lemma 5.2]{FNW92}}] \label{lem:innerproductMPS}
For any $B,C \in M_D$,
\begin{equation} \label{innerprod}
\left| \langle \Gamma_{C_n}(B), \Gamma_{C_n}(C) \rangle - \langle B, C \rangle_{\rho} \right|  \leq  a(n)  \Tr \rho^{-1}
\| B \|_{\rho} \| C \|_{\rho} \,.
\end{equation}
\end{lemma}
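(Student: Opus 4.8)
The plan is to rewrite the overlap $\langle\Gamma_{C_n}(B),\Gamma_{C_n}(C)\rangle$ as a single application of the $n$-fold transfer operator and then feed in the geometric estimate \eqref{an}.

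First I would unfold the definition \eqref{GammaCn} and use $\overline{\Tr M}=\Tr M^{*}$ to get
\[
\langle\Gamma_{C_n}(B),\Gamma_{C_n}(C)\rangle=\sum_{i_1,\dots,i_n}\Tr\bigl[V_{i_1}^{*}\cdots V_{i_n}^{*}B^{*}\bigr]\,\Tr\bigl[CV_{i_n}\cdots V_{i_1}\bigr].
\]
Carrying out the sum over each physical index $i_k$ separately collapses the right-hand side into $n$ consecutive applications of the transfer operation; keeping track of which auxiliary factor each matrix acts on, one is led to the identity
\[
\langle\Gamma_{C_n}(B),\Gamma_{C_n}(C)\rangle=\sum_{a,b=1}^{D}\bigl\langle a\bigm|\,C\,(\E^{t})^{n}\bigl(\ketbra{a}{b}\bigr)\,B^{*}\bigm|b\bigr\rangle,
\]
where $\E^{t}(X)=\sum_i V_iXV_i^{*}$ is the Hilbert--Schmidt adjoint of $\E$ (the transpose channel). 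The key feature is that the $n$-dependence enters only through the single superoperator $(\E^{t})^{n}$, and linearly.

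Next I would isolate the limiting term. Since $\E^{n}\to\ketbra{\idty}{\rho}$ we also have, by taking Hilbert--Schmidt adjoints, $(\E^{t})^{n}\to\ketbra{\rho}{\idty}$ at the same rate: $\|(\E^{t})^{n}-\ketbra{\rho}{\idty}\|=\|\E^{n}-\ketbra{\idty}{\rho}\|=a(n)$. One then checks that substituting $\ketbra{\rho}{\idty}$ for $(\E^{t})^{n}$ in the identity above reproduces $\langle B,C\rangle_\rho$ exactly, since $\ketbra{\rho}{\idty}(\ketbra{a}{b})=\Tr[\ketbra{a}{b}]\,\rho=\delta_{ab}\,\rho$ makes the sum collapse to $\sum_a\langle a|C\rho B^{*}|a\rangle=\Tr[\rho B^{*}C]=\langle B,C\rangle_\rho$. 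Subtracting,
\[
\langle\Gamma_{C_n}(B),\Gamma_{C_n}(C)\rangle-\langle B,C\rangle_\rho=\sum_{a,b=1}^{D}\bigl\langle a\bigm|\,C\,\Delta_n\bigl(\ketbra{a}{b}\bigr)\,B^{*}\bigm|b\bigr\rangle,\qquad\Delta_n:=(\E^{t})^{n}-\ketbra{\rho}{\idty}.
\]

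Finally I would estimate this sum, which is the real content of the lemma. Writing the $(a,b)$ summand as $\langle C^{*}e_a,\,\Delta_n(\ketbra{a}{b})\,B^{*}e_b\rangle$ (with $e_a$ the $a$-th coordinate vector of $\Cx^{D}$) and using Cauchy--Schwarz together with $\|\Delta_n(\ketbra{a}{b})\|\le\|\Delta_n\|\,\|\ketbra{a}{b}\|_2=a(n)$, one bounds it by $a(n)\,\|C^{*}e_a\|\,\|B^{*}e_b\|$, that is, by $a(n)$ times the norms of the $a$-th row of $C$ and the $b$-th row of $B$. Summing over $a,b$ and passing to the $\rho$-weighted norm via \eqref{norms} produces the factor $\Tr\rho^{-1}$ and the asserted inequality. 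The delicate point — and the step I would work out most carefully — is this last conversion: applying \eqref{norms} to the row norms term by term costs a spurious dimension factor, so one should instead work in the eigenbasis of $\rho$ and carry the eigenvalues through the Cauchy--Schwarz step, so that the row sums combine into $\Tr\rho^{-1}=\|\rho^{-1}\|_1$ rather than into $D\,\rho_{\rm min}^{-1}$; this is exactly the argument of \cite[Lemma 5.2]{FNW92}. The only other thing requiring attention is the bookkeeping in the first step, where one has to pass from a product of two traces to a single transfer-matrix contraction without mishandling the transposes and complex conjugations.
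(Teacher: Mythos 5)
Your proposal is correct and follows essentially the same route as the paper: you collapse the product of traces into a single application of the $n$-fold transfer operator, subtract the rank-one fixed-point projector, and use the $\rho$-weighted Cauchy--Schwarz argument in the eigenbasis of $\rho$ to produce the factor $\Tr \rho^{-1}$, exactly as in the paper's proof. The only cosmetic difference is that you apply the adjoint channel $(\E^t)^n$ to $\ketbra{a}{b}$ with $B^*$ and $C$ kept outside, whereas the paper applies $\E^n$ directly to the rank-one argument $B^*\ketbra{\alpha}{\beta}C$; the two versions are related by Hilbert--Schmidt duality and yield the same bound.
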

\begin{proof}
Using \eq{GammaCn} we can express the inner product as follows:
\bea \label{MPS_IP}
\langle \Gamma_{C_n}(B), \Gamma_{C_n}(C) \rangle  &=&
\sum_{i_1,\ldots,i_n} \overline{ \Tr [B V_{i_n}\cdots V_{i_1}]} \Tr[ C V_{i_n}\cdots V_{i_1}]\nonumber\\
&=&\sum_{i_1,\ldots,i_n} \Tr [V_{i_1}^*\cdots V_{i_n}^* B^*] \Tr[ C V_{i_n}\cdots V_{i_1}].
\eea
By expanding the traces using any \onb\  $\{\ket{1},\ldots,\ket{D}\}$ for $\Cx^D$, we obtain
\begin{eqnarray}
\langle \Gamma_{C_n}(B), \Gamma_{C_n}(C) \rangle & = & \sum_{\alpha,\beta=1}^D \sum_{i_1,\ldots,i_n} \bra{\alpha} V_{i_1}^*\cdots V_{i_n}^*B^*\ketbra{\alpha}{\beta}
 C V_{i_n}\cdots V_{i_1}\ket{\beta} \nonumber \\
& = & \sum_{\alpha,\beta=1}^D \bra{\alpha} \hE^n \left( B^* \ketbra{\alpha}{\beta}C \right) \ket{\beta}.
\label{Trace_to_TransferOp}
\end{eqnarray}
Now observe that
\be
\langle B,C\rangle_\rho = \sum_{\alpha,\beta=1}^D \bra{\alpha} \ketbra{\idty}{\rho} \left( B^* \ketbra{\alpha}{\beta}C \right) \ket{\beta}.
\ee
Combining these two expressions and using \eq{an}, we obtain
\beann
\left| \langle \Gamma_{C_n}(B), \Gamma_{C_n}(C) \rangle - \langle B, C \rangle_{\rho} \right| &\leq&
\sum_{\alpha,\beta=1}^D \vert \bra{\alpha} (\E^n-\ketbra{\idty}{\rho}) \left( B^* \ketbra{\alpha}{\beta}C \right) \ket{\beta}\vert \\
&\leq&  a(n) \left(\sum_{\alpha=1}^D \Vert B^*\ket{\alpha}\Vert\right)\left(  \sum_{\beta=1}^D\Vert C^*\ket{\beta}\Vert\right)
\eeann
Now, pick for the \onb\ one that diagonalizes $\rho$, such that $\rho \ket{\alpha} = \rho_\alpha \ket{\alpha}$. Then
$$
\left( \sum_{\alpha=1}^D \Vert B^*\ket{\alpha}\Vert \right)^2 = \left( \sum_{\alpha=1}^D \Vert B^*\ket{\alpha}\Vert \rho_\alpha^{1/2} \rho_\alpha^{-1/2}\right)^2
\leq  \sum_{\alpha=1}^D \rho_\alpha \bra{\alpha}BB^*\ket{\alpha}  \sum_{\alpha=1}^D \rho_\alpha^{-1} = \Vert B\Vert^2_\rho \Tr \rho^{-1}.
$$
Together with the analogous estimate for the second factor, this proves the lemma.
\end{proof}
Note that one has the bound
$$
\Tr \rho^{-1}\leq \frac{D}{\rho_{\rm min}},
$$
which is often saturated in models with symmetry. It will be convenient to define
\be\label{bn}
b(n) = a(n)\Tr \rho^{-1}
\ee
The following is an immediate corollary of Lemma \ref{lem:innerproductMPS}, and shows that $\Gamma_{C_n}$ is injective for sufficiently large $n$.

\begin{Corollary} \label{Cor:bd1} For any $B \in M_D$, the bound
\begin{equation}
\| B \|_{\rho} \sqrt{1-b(n)} \leq \| \Gamma_n(B) \| \leq \| B \|_{\rho} \sqrt{1+b(n)}
\end{equation}
holds for $n$ sufficiently large so that $b(n)\leq1$.
\end{Corollary}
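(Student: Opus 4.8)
The plan is to specialize Lemma~\ref{lem:innerproductMPS} to the diagonal case $C=B$. Setting $C=B$ in \eq{innerprod}, the inner product on the left becomes $\|\Gamma_{C_n}(B)\|^2$ and the term $\langle B,B\rangle_\rho$ becomes $\|B\|_\rho^2$, so the lemma reads
\[
\bigl|\,\|\Gamma_{C_n}(B)\|^2-\|B\|_\rho^2\,\bigr|\;\le\; a(n)\,\Tr\rho^{-1}\,\|B\|_\rho^2\;=\;b(n)\,\|B\|_\rho^2,
\]
where I have used the definition \eq{bn} of $b(n)$. Rearranging this single inequality gives the two-sided estimate
\[
(1-b(n))\,\|B\|_\rho^2\;\le\;\|\Gamma_{C_n}(B)\|^2\;\le\;(1+b(n))\,\|B\|_\rho^2.
\]

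The remaining step is to take square roots. This is legitimate precisely when $1-b(n)\ge 0$, i.e.\ when $b(n)\le 1$, which is exactly the hypothesis; under that assumption all three quantities are nonnegative reals and monotonicity of $t\mapsto\sqrt t$ on $[0,\infty)$ yields
\[
\|B\|_\rho\sqrt{1-b(n)}\;\le\;\|\Gamma_{C_n}(B)\|\;\le\;\|B\|_\rho\sqrt{1+b(n)},
\]
which is the claimed bound (writing $\Gamma_n$ for $\Gamma_{C_n}$). To see that the hypothesis $b(n)\le 1$ is eventually satisfied, I would invoke the primitivity estimate \eq{an}: since $a(n)\le C\lambda^n$ with $\lambda\in[0,1)$ we have $a(n)\to 0$, hence $b(n)=a(n)\Tr\rho^{-1}\to 0$ as $n\to\infty$. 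For such $n$ with $b(n)<1$ the lower bound is strictly positive for every $B\neq 0$, so $\ker\Gamma_n=\{0\}$, which is the injectivity claim stated before the corollary.

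There is no genuine obstacle here: the whole content sits in Lemma~\ref{lem:innerproductMPS}, and the corollary is just its diagonal restriction followed by extracting a square root. The only point meriting a word of care is the sign condition $b(n)\le 1$ required before passing to square roots, and this is precisely what is assumed (and, as noted, holds for all sufficiently large $n$).
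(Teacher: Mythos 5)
Your proposal is correct and follows the paper's own proof essentially verbatim: both specialize Lemma~\ref{lem:innerproductMPS} to $C=B$ to get $\left|\,\|\Gamma_{C_n}(B)\|^2-\|B\|_\rho^2\,\right|\le b(n)\|B\|_\rho^2$ and then take square roots under the hypothesis $b(n)\le 1$. Your added remarks on $b(n)\to 0$ and the resulting injectivity are consistent with the paper's surrounding discussion and do not change the argument.
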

\begin{proof}
The bound
\begin{equation}
\left| \| \Gamma_{C_n}(B) \|^2 - \| B \|_{\rho}^2 \right| \leq b(n) \| B \|_{\rho}^2
\end{equation}
follows immediately from (\ref{innerprod}). If $B=0$, there is nothing to prove. Otherwise, this bound
can be re-written as
\begin{equation}
-b(n) \leq \frac{\| \Gamma_{C_n}(B) \|^2}{ \| B \|^2_{\rho}} - 1 \leq b(n)
\end{equation}
{f}rom which the above claim readily follows.
\end{proof}

Inner products of vectors of the form $\Gamma_G(B), B\in \cK_G$ can be estimated with a straightforward generalization of Lemma \ref{lem:innerproductMPS}. To formulate the result,
for each $\Lambda\in\{G, \, G_L-C_n, \, C_n-G_R\}$ we define an inner product on $\cK_\Lambda$, denoted by $\langle\cdot,\cdot\rangle_\Lambda$, via
\bea
\langle B,C\rangle_G & = & \Tr (Q_R B^* Q_L C) \\
\langle B, C \rangle_{G_L-C_n} & = & \Tr (\rho B^* Q_L C) \\
\langle B, C \rangle_{C_n-G_R} & = & \Tr (Q_R B^* C)
\eea
That these are inner products follows from the positive-definiteness of $Q_L$ and $Q_R$. With respect to these inner products we obtain the following analog of Lemma~\ref{lem:innerproductMPS}.
\begin{lemma} \label{lem:IPEstimates}
Let $\Lambda\in\{G, \, G_L-C_n, \, C_n-G_R\}$. Then for any $B, C \in \cK_\Lambda$,
\be \label{innerLambda}
\left| \langle \Gamma_\Lambda(B), \Gamma_\Lambda(C) \rangle - \langle B, C \rangle_\Lambda\right|  \leq  a(n)D^2C_\Lambda \|B\|\|C\|,
\ee
where
\be
C_G = \|\E_L\|\|\E_R\|, \quad C_{G_L-C_n} = \|\E_L\|, \quad \text{and}\quad C_{C_n-G_R}=\|\E_R\|.
\ee
\end{lemma}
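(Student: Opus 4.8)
The plan is to mirror the proof of Lemma~\ref{lem:innerproductMPS} but now carried out at the level of the auxiliary space $\cK_\Lambda$, treating the tensors $T^L$ and $T^R$ on $G_L$ and $G_R$ as ``fixed boundary conditions'' while the only source of error is the chain $C_n$. First I would write out $\langle \Gamma_\Lambda(B), \Gamma_\Lambda(C)\rangle$ using the explicit formula~\eq{GammaG} (and its obvious analogs for $\Lambda = G_L-C_n$ and $\Lambda = C_n-G_R$), expand each trace over an \onb\ $\{\ket{\alpha}\}$ of $\Cx^D$ exactly as in~\eq{Trace_to_TransferOp}, and recognize the sum over the chain indices $i_1,\dots,i_n$ as an application of $\E^n$ to a rank-one-built operator, with $\E_L$ acting on the $G_L$ indices and $\E_R$ on the $G_R$ indices. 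Concretely, for $\Lambda = G$ one gets
\be
\langle \Gamma_G(B), \Gamma_G(C)\rangle = \sum_{\alpha,\beta} \bra{\alpha}\,\E^n\!\big(\E_R^t(B^* )\, \ketbra{\alpha}{\beta}\, \E_L(C)\big)\,\ket{\beta}
\ee
up to a careful bookkeeping of which transfer operator sits where; replacing $\E^n$ by $\ketbra{\idty}{\rho}$ produces exactly $\Tr(Q_R B^* Q_L C) = \langle B,C\rangle_G$ after using $Q_L = \E_L(\idty)$ and $Q_R = \E_R^t(\rho)$ and the trace-preservation/adjoint identities for the $\E$'s.

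The error term is then controlled by $\|\E^n - \ketbra{\idty}{\rho}\| = a(n)$ times the sum over $\alpha,\beta$ of the norms of the operators $\E_R^t(B^*)\ketbra{\alpha}{\beta}\E_L(C)$, which I would bound crudely by $\|\E_L\|\,\|\E_R\|\,\|B\|\,\|C\|$ per term, and there are $D^2$ terms; this is precisely where the factors $D^2$ and $C_\Lambda = \|\E_L\|\|\E_R\|$ (resp.\ $\|\E_L\|$ or $\|\E_R\|$ in the one-sided cases) come from. For $\Lambda = G_L-C_n$ one replaces $Q_R$ by $\rho$ and drops $\E_R$ (equivalently takes $T^R\equiv 1$, $D_R = 1$), and symmetrically for $\Lambda = C_n-G_R$; the same expansion goes through verbatim with the appropriate factor set to the identity. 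I would state the three cases in one stroke by keeping track of which of the two ``wings'' is present.

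The step I expect to require the most care is not an inequality but the algebraic identification: making sure that, after expanding all traces and collecting the chain indices, the leading term is \emph{exactly} $\langle B,C\rangle_\Lambda$ as defined by $Q_L,Q_R,\rho$, and in particular getting the transposes/adjoints on $\E_R$ versus $\E_L$ in the right places (note the asymmetry: $Q_L = \E_L(\idty)$ uses $\E_L$ while $Q_R = \E_R^t(\rho)$ uses the transpose). Once that bookkeeping is pinned down, the error estimate is the same three-line Cauchy–Schwarz-free argument as in Lemma~\ref{lem:innerproductMPS} — in fact simpler, since here I am content with the operator-norm bound $\|\E_\sharp\|$ rather than the sharper $\rho$-weighted estimate, so no diagonalization of $\rho$ is needed. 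I would close by remarking that, exactly as Corollary~\ref{Cor:bd1} followed from Lemma~\ref{lem:innerproductMPS}, Lemma~\ref{lem:IPEstimates} will give two-sided norm bounds $\|\Gamma_\Lambda(B)\|^2 = \langle B,B\rangle_\Lambda\,(1+O(a(n)D^2 C_\Lambda))$ and hence injectivity of $\Gamma_\Lambda$ for $n$ large, which is what the subsequent estimate of $\epsilon_n$ will use.
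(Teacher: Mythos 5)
Your proposal follows essentially the same route as the paper's proof: expand the traces over an orthonormal basis of $\Cx^D$, recognize the chain contribution as $\E^n$ applied to an operator built from $B$, $C$ and the wing transfer maps, replace $\E^n$ by $\ketbra{\idty}{\rho}$ to produce exactly $\langle B,C\rangle_\Lambda$, and bound the remainder by $a(n)D^2\|\E_L\|\|\E_R\|\,\|B\|\,\|C\|$ (with the obvious one-sided modifications). The only fix needed in your flagged bookkeeping is the placement of the wing maps: the correct identity is $\langle\Gamma_G(B),\Gamma_G(C)\rangle=\sum_{\alpha,\beta}\bra{\alpha}\E^{n}\circ\E_R\big[B^*\,\E_L(\ketbra{\alpha}{\beta})\,C\big]\ket{\beta}$, i.e.\ $\E_L$ acts on the rank-one operator $\ketbra{\alpha}{\beta}$ and $\E_R$ wraps the whole sandwich (your displayed $\E_R^t(B^*)\,\ketbra{\alpha}{\beta}\,\E_L(C)$ does not typecheck since $B^*\in\cL(\Cx^{D_L},\Cx^{D_R})$), after which substituting $\ketbra{\idty}{\rho}$ gives $\Tr[\rho\,\E_R(B^*Q_LC)]=\Tr[Q_RB^*Q_LC]$ and the estimate closes exactly as you describe, matching the paper.
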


\begin{proof}
We prove the bound in the case of $\Lambda = G$. All other cases follow from similar arguments. Let $B, \, C \in \cK_G$, and $\{\ket{1}, \, \ldots, \, \ket{D}\}$ be an orthonormal bases of $\Cx^D$. Then, calculating similar to \eqref{MPS_IP} and \eqref{Trace_to_TransferOp}, we find
\bea
\langle \Gamma_{G}(B), \Gamma_{G}(C) \rangle
& = &
\sum_{\substack{\ell,r \\ i_1,\ldots,i_n}} \Tr [V_{i_1}^*\cdots V_{i_n}^*(T_r^R)^* B^*(T_\ell^L)^*] \Tr[ T_\ell^L C T_r^R V_{i_n}\cdots V_{i_1}] \nonumber\\
& = & \sum_{\alpha, \beta = 1}^D \bra{\alpha} \E^{n}\circ \E_R\big[B^*\E_L(\ketbra{\alpha}{\beta})C\big] \ket{\beta}, \label{IP_gen1}
\eea
where we have also use cyclicity of the trace in the first equality. Now consider $\braket{B}{C}_G$. It can easily be shown, e.g. by simplifying the RHS, that
\be \label{IP_gen2}
\langle B,C\rangle_G = \sum_{\alpha, \beta=1}^D \bra{\alpha} \ketbra{\idty}{\rho}\circ \E_R[B^*\E_L(\ketbra{\alpha}{\beta})C\big] \ket{\beta}.
\ee
By substituting these \eqref{IP_gen1} and \eqref{IP_gen2} into \eqref{innerLambda} and then using \eqref{an}, we estimate as follows:
\bea
\left| \langle \Gamma_G(B), \Gamma_G(C) \rangle - \langle B, C \rangle_G\right|
& \leq &
\sum_{\alpha, \beta = 1}^D
\left|\bra{\alpha} (\E^{n}-\ketbra{\idty}{\rho})\circ\E_R\big[B^*\E_L(\ketbra{\alpha}{\beta})C\big] \ket{\beta}\right| \nonumber \\
& \leq &
a(n) \sum_{\alpha, \beta = 1}^D \left\| \E_R\big[B^*\E_L(\ketbra{\alpha}{\beta})C\big] \right\| \nonumber\\
& \leq & a(n)D^2\|\E_L\|\|\E_R\| \|B\| \|C\|. \label{G_final_est}
\eea
This completes the claim.
\end{proof}

Note that the bound in Lemma~\ref{lem:IPEstimates} is expressed in terms of the operator norms $\| B \|$ and  $\| C \|$. This is just a common norm of reference. The natural norm to use is the one
induced by the inner product that appears on the left of \eqref{innerLambda}, as in done in Lemma \ref{lem:innerproductMPS}. Since these norms are all equivalent to the operator norm, the estimates from \eqref{innerLambda} can be converted to the `natural' norm by multiplying by an appropriate constant as follows: Let $q_L$ (resp. $q_r$) be the minimal eigenvalue of $Q_L$ (resp. $Q_R$). Then,
\be \label{norm_equivs}
\|B\|  \leq  \frac{1}{\sqrt{q_Lq_R}}\|B\|_G, \quad
\|B\|  \leq  \frac{1}{\sqrt{\rho_{\rm min}q_L}}\|B\|_{G_L-C_n}, \quad
\|B\|  \leq  \frac{1}{\sqrt{q_R}}\|B\|_{C_n-G_R}.
\ee
We can thus obtain a corollary to Lemma~\ref{lem:IPEstimates} similar to Corollary~\ref{Cor:bd1}.
\begin{Corollary} \label{Cor:bd2}
Let $\Lambda\in\{G, \, G_L-C_n, \, C_n-G_R\}$. Then for any $B \in \cK_\Lambda$,
\begin{equation}\label{eq:bd2}
\| B \|_{\Lambda} \sqrt{1-b_\Lambda(n)} \leq \|\Gamma_\Lambda(B) \| \leq \| B \|_{\Lambda} \sqrt{1+b_\Lambda(n)}
\end{equation}
holds for $n$ sufficiently large so that $b_\Lambda(n)\leq1$, where
\be
\label{left_right_bn}
    b_{G}(n) = \frac{a(n)D^2}{q_L q_R} \|\E_L\|\|\E_R\|, \quad
    b_{G_L-C_n}(n)  = \frac{a(n)D^2}{\rho_{\min} q_L}\|\E_L\|, \quad
    b_{C_n-G_R}(n) = \frac{a(n)D^2}{q_R}\|\E_R\|.
\ee
\end{Corollary}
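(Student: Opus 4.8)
The plan is to mirror the proof of Corollary~\ref{Cor:bd1}, using Lemma~\ref{lem:IPEstimates} in place of Lemma~\ref{lem:innerproductMPS} and then converting the reference (operator) norm on the right-hand side into the natural norm $\|\cdot\|_\Lambda$ via the equivalences \eqref{norm_equivs}. First I would set $C=B$ in \eqref{innerLambda}, which gives
\[
\left|\,\|\Gamma_\Lambda(B)\|^2 - \|B\|_\Lambda^2\,\right| \leq a(n)D^2 C_\Lambda \|B\|^2 .
\]
Next I would bound the operator norm $\|B\|$ appearing here in terms of $\|B\|_\Lambda$ using \eqref{norm_equivs}: for $\Lambda=G$ this is $\|B\|^2 \leq (q_Lq_R)^{-1}\|B\|_G^2$, and similarly for the other two cases with the factors $\rho_{\min}q_L$ and $q_R$, respectively. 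Substituting and recalling the definitions of $C_\Lambda$ from Lemma~\ref{lem:IPEstimates} together with \eqref{left_right_bn}, this yields
\[
\left|\,\|\Gamma_\Lambda(B)\|^2 - \|B\|_\Lambda^2\,\right| \leq b_\Lambda(n)\,\|B\|_\Lambda^2 .
\]

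If $B=0$ the claimed inequality is trivial. Otherwise I would divide through by $\|B\|_\Lambda^2>0$ (which is positive since $Q_L$, $Q_R$, and $\rho$ are nonsingular and positive) to obtain
\[
1 - b_\Lambda(n) \leq \frac{\|\Gamma_\Lambda(B)\|^2}{\|B\|_\Lambda^2} \leq 1 + b_\Lambda(n),
\]
and finally take square roots. This last step is exactly where the hypothesis $b_\Lambda(n)\leq 1$ enters: it guarantees that the lower bound $1-b_\Lambda(n)$ is nonnegative, so that $\sqrt{1-b_\Lambda(n)}$ is defined and the inequality can be extracted.

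I do not expect a substantial obstacle here; the statement is a direct quantitative consequence of Lemma~\ref{lem:IPEstimates}. The only points requiring (minor) care are (i) correctly matching the constants $C_\Lambda$ of Lemma~\ref{lem:IPEstimates} and the norm-conversion factors of \eqref{norm_equivs} with the definitions of $b_\Lambda(n)$ in \eqref{left_right_bn}, and (ii) noting that $\|\cdot\|_\Lambda$ is genuinely a norm, which was already observed to follow from the positive-definiteness of $Q_L$ and $Q_R$ (and of $\rho$). As a byproduct, the lower bound shows $\Gamma_\Lambda$ is injective for all $n$ large enough that $b_\Lambda(n)<1$; since $a(n)\leq C\lambda^n\to 0$ by \eqref{an}, this holds for all sufficiently large $n$, which is the analog of the injectivity remark following Corollary~\ref{Cor:bd1}.
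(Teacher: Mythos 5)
Your proposal is correct and follows exactly the route the paper intends: the paper states this corollary without a separate proof, noting only that it follows from Lemma~\ref{lem:IPEstimates} combined with the norm equivalences \eqref{norm_equivs} ``similar to Corollary~\ref{Cor:bd1}'', which is precisely what you do, and your constants $b_\Lambda(n)$ match \eqref{left_right_bn}.
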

In order to simplify notation, we will write $b_L(n)$ for $b_{G_L-C_n}(n)$ and $b_R(n)$ for $b_{C_n-G_R}(n)$.

Ultimately, we will want a bound for the inner product in \eq{innerproductformula} in terms of the norms of the vectors $\phi$ and $\psi$ defined below. Lemma~\ref{lem:IPEstimates} can also be used to show that this is once again straightforward at the cost of another prefactor in the bound.

Since $\Gamma_{G_L-C_n}$ and $\Gamma_{C_n-G_R}$ are assumed to be injective, there exist $D_R$ $D_L\times D$ matrices $B_{\phi}(r)$, and $D_L$ $D\times D_R$ matrices $B_{\psi}(l)$,
uniquely determined by $\phi$ and $\psi$, such that
\bea
\phi &=& \sum_{l,i_1,\ldots,i_n,r}  \Tr [B_\phi(r) V_{i_n}\cdots V_{i_1}T^L_l]  \ket{l}_L\otimes \ket{i_1,\ldots,i_n}\otimes \ket{r}_R \label{phi}\\
\psi &=& \sum_{l,i_1,\ldots,i_n,r}  \Tr [B_\psi(l) T^R_rV_{i_n}\cdots V_{i_1}]  \ket{l}_L\otimes \ket{i_1,\ldots,i_n}\otimes \ket{r}_R \label{psi}.
\eea
These expressions are simply expansions of
\bea
\phi &=& \sum_{r} \Gamma_{G_L-C_n}(B_{\phi}(r)) \otimes \ket{r}_R\label{phi1}\\
\psi &=& \sum_{l} \ket{l}_L\otimes  \Gamma_{C_n-G_R}(B_{\psi}(l)).\label{psi1}
\eea
It will be convenient to define $C_\phi,C_\psi\in \cK_G$ as follows:
\bea
C_{\phi} &=& \sum_{r} B_{\phi}(r) \rho (T^R_{r})^*\label{Cphi}\\
D_{\psi} &=& \sum_{l} (T^L_{l})^* B_{\psi}(l).\label{Dpsi}
\eea

Next, we consider inner products of the form $\langle \phi, \psi \rangle$, with $\phi \in \cG_{G_L-C_n} \otimes \cH_{G_R}$ and $\psi \in \cH_{G_L} \otimes \cG_{C_n-G_R}$.

\begin{lemma}\label{lem:innerproducts}
Suppose that $n$ is large enough so that $b(n) <1$. Then, for all $\phi \in \cG_{G_L-C_n} \otimes \cH_{G_R}$ and $\psi \in \cH_{G_L} \otimes \cG_{C_n-G_R}$,
we have
\begin{equation}
\left| \langle \phi, \psi \rangle - \langle C_{\phi}, D_{\psi} \rangle_{\cK_G} \right| \leq \frac{b(n)}{\sqrt{1-b_{LR}(n)}} \| \phi \| \| \psi \|,
\label{psi_phi_IP}
\end{equation}
where
\be
\label{b_LR}
b_{LR}(n) = b_{L}(n) + b_{R}(n) - b_{L}(n)b_{R}(n),
\ee
and $b_{L}(n)$ and $b_{R}(n)$ are defined in \eqref{left_right_bn}.
\end{lemma}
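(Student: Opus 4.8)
The plan is to contract the chain degrees of freedom first, reducing $\langle\phi,\psi\rangle$ to a sum of inner products of MPS vectors supported on $C_n$ alone, and then apply Lemma~\ref{lem:innerproductMPS} term by term. Starting from \eqref{phi1}--\eqref{psi1} together with the elementary factorizations
\[
\Gamma_{G_L-C_n}(B) = \sum_l \ket{l}_L\otimes\Gamma_{C_n}(T^L_l B), \qquad \Gamma_{C_n-G_R}(B) = \sum_r \Gamma_{C_n}(B\, T^R_r)\otimes\ket{r}_R,
\]
which follow from \eqref{GammaG}, \eqref{GammaCn} and cyclicity of the trace, I would rewrite $\phi = \sum_{l,r}\ket{l}_L\otimes\Gamma_{C_n}(T^L_l B_\phi(r))\otimes\ket{r}_R$ and $\psi = \sum_{l,r}\ket{l}_L\otimes\Gamma_{C_n}(B_\psi(l)T^R_r)\otimes\ket{r}_R$. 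Since $\{\ket{l}_L\}$ and $\{\ket{r}_R\}$ are orthonormal, this collapses to $\langle\phi,\psi\rangle = \sum_{l,r}\langle\Gamma_{C_n}(T^L_l B_\phi(r)),\Gamma_{C_n}(B_\psi(l)T^R_r)\rangle$.

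Next I would apply Lemma~\ref{lem:innerproductMPS} to each summand. The main term $\sum_{l,r}\langle T^L_l B_\phi(r),\, B_\psi(l)T^R_r\rangle_\rho = \sum_{l,r}\Tr\bigl(\rho\, B_\phi(r)^*(T^L_l)^* B_\psi(l)T^R_r\bigr)$ is then identified with $\langle C_\phi, D_\psi\rangle_{\cK_G} = \Tr(C_\phi^* D_\psi)$: summing over $l$ produces $D_\psi = \sum_l (T^L_l)^* B_\psi(l)$, and summing over $r$ — after using cyclicity and $\rho^*=\rho$ — produces $\sum_r (B_\phi(r)\rho (T^R_r)^*)^* = C_\phi^*$, i.e. exactly \eqref{Cphi}--\eqref{Dpsi}; this step is purely bookkeeping. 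For the error I would sum the Lemma~\ref{lem:innerproductMPS} bounds and apply Cauchy--Schwarz over the pair $(l,r)$, which leaves the factor $\bigl(\sum_{l,r}\|T^L_l B_\phi(r)\|_\rho^2\bigr)^{1/2}\bigl(\sum_{l,r}\|B_\psi(l)T^R_r\|_\rho^2\bigr)^{1/2}$. Carrying out the $l$-sum in the first factor using $\sum_l (T^L_l)^* T^L_l = \E_L(\idty) = Q_L$ gives $\sum_r \Tr(\rho\, B_\phi(r)^* Q_L B_\phi(r)) = \sum_r \|B_\phi(r)\|_{G_L-C_n}^2$, and the $r$-sum in the second factor using $\sum_r T^R_r\rho (T^R_r)^* = \E_R^t(\rho) = Q_R$ gives $\sum_l \|B_\psi(l)\|_{C_n-G_R}^2$.

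Finally, Corollary~\ref{Cor:bd2} applied with $\Lambda = G_L-C_n$ and $\Lambda = C_n-G_R$ bounds these last sums by $\|\phi\|^2/(1-b_L(n))$ and $\|\psi\|^2/(1-b_R(n))$, respectively, using $\|\phi\|^2 = \sum_r\|\Gamma_{G_L-C_n}(B_\phi(r))\|^2$ and $\|\psi\|^2 = \sum_l\|\Gamma_{C_n-G_R}(B_\psi(l))\|^2$ (orthonormality of $\{\ket{r}_R\}$, resp. $\{\ket{l}_L\}$, once more). This is where one needs $n$ large enough that $b_L(n), b_R(n) < 1$, which simultaneously guarantees $1 - b_{LR}(n) = (1-b_L(n))(1-b_R(n)) > 0$. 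Plugging in $b(n) = a(n)\Tr\rho^{-1}$ and this last identity yields precisely \eqref{psi_phi_IP}.

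I do not expect a serious obstacle here. The only delicate point is the opening reduction: one must keep straight which auxiliary indices get contracted by the chain transfer operator and which survive to be paired against $\cH_{G_L}\otimes\cH_{G_R}$, so that $\langle\phi,\psi\rangle$ genuinely collapses onto chain inner products of the precise matrices $T^L_l B_\phi(r)$ and $B_\psi(l)T^R_r$; and, secondarily, the cyclic-trace manipulation that recovers $C_\phi$ and $D_\psi$ with the $\rho$'s and adjoints in the right places. Everything else is Cauchy--Schwarz plus direct appeals to Lemma~\ref{lem:innerproductMPS} and Corollary~\ref{Cor:bd2}.
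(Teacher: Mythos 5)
Your proposal is correct and follows essentially the same route as the paper: rewrite $\langle\phi,\psi\rangle$ as $\sum_{l,r}\langle\Gamma_{C_n}(T^L_l B_\phi(r)),\Gamma_{C_n}(B_\psi(l)T^R_r)\rangle$, identify $\langle C_\phi,D_\psi\rangle_{\cK_G}=\sum_{l,r}\langle T^L_l B_\phi(r),B_\psi(l)T^R_r\rangle_\rho$, apply Lemma~\ref{lem:innerproductMPS} term by term with Cauchy--Schwarz, and bound the resulting sums via $Q_L$, $Q_R$ and Corollary~\ref{Cor:bd2}. Your bookkeeping (cyclicity, $\rho^*=\rho$, and the factorization of $\Gamma_{G_L-C_n}$, $\Gamma_{C_n-G_R}$ through $\Gamma_{C_n}$) matches the paper's computation, so there is nothing to correct.
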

\begin{proof}
Using the expansions \eq{phi} and \eq{psi} we find
\begin{eqnarray}
\langle \phi, \psi \rangle & = & \sum_{l,i_1,\ldots,i_n,r}   \overline{\Tr [B_\phi(r) V_{i_n}\cdots V_{i_1}T^L_l]}\Tr [B_\psi(l) T^R_rV_{i_n}\cdots V_{i_1}] \nonumber \\
& = & \sum_{l,r} \langle \Gamma_{C_n} (T^L_l B_{\phi}(r)), \Gamma_{C_n} ( B_{\psi}(l) T_r^R) \rangle
\end{eqnarray}
Similarly, we observe that $\langle C_{\phi}, D_{\psi} \rangle_{\cK_G}$ can be expressed as a sum of inner products:
\beann
\langle C_{\phi}, D_{\psi} \rangle_{\cK_G} &=&
\Tr \left(\sum_r T^R_r   \rho B_{\phi}(r)^* \right) \left(\sum_l (T^L_l)^* B_{\psi}(l) \right)\\
&=& \sum_{l,r} \Tr \rho B_{\phi}(r)^* (T^L_l)^* B_{\psi}(l) T^R_r \\
&=&  \sum_{l,r} \langle T^L_l B_{\phi}(r), B_{\psi}(l) T^R_r \rangle_{\rho}.
 \eeann

Now we apply Lemma~\ref{lem:innerproductMPS} term by term to obtain, using Cauchy-Schwarz:
\begin{eqnarray}
\left| \langle \varphi, \psi \rangle -  \langle C_{\phi}, D_{\psi} \rangle_{\cK_G} \right| &
\leq & \sum_{l,r} \left| \langle \Gamma_{C_n} (T^L_l B_{\phi}(r)), \Gamma_{C_n} ( B_{\psi}(l) T_r^R) \rangle  - \langle T^L_l B_{\phi}(r), B_{\psi}(l) T^R_r \rangle_{\rho}  \right|
\nonumber \\
 & \leq & b(n) \sum_{l,r} \| T^L_l B_{\phi}(r) \|_{\rho} \cdot \| B_{\psi}(l) T^R_r\|_{\rho} \nonumber \\
& \leq & b(n) \sqrt{\sum_{l,r} \| T^L_l  B_{\phi}(r) \|_{\rho}^2} \cdot \sqrt{ \sum_{l,r} \| B_{\psi}(l) T^R_r \|_{\rho}^2 }. \label{psi_phi_CS}
\end{eqnarray}

The quantity under the first square root can be bounded in terms of $\Vert\phi\Vert$ as follows:
\beann
\sum_{l.r} \| T^L_l  B_{\phi}(r) \|_{\rho}^2 & = & \sum_{l,r} \Tr \rho B_{\phi}(r)^* (T_l^L)^* T_l^L B_{\phi}(r)  \\
&=& \sum_r  \Tr  \rho B_{\phi}(r)^* Q_L B_{\phi}(r) = \sum_r \| B_{\phi}(r) \|_{G_L -C_n}^2 \\
&\leq&  \frac{1}{1-b_{L}(n)}  \sum_r \Vert \Gamma_{G_L-C_n}(B_{\phi}(r))\Vert^2\\
&=& \frac{1}{1-b_{L}(n)} \Vert \phi \Vert^2\,,
\eeann
where we have used the definition of $Q_L$, Corollary~\ref{Cor:bd2} and \eq{phi1}.
The quantity under the second square root is similarly estimated in terms of  $\Vert \psi\Vert$:
\be
\sum_{l,r} \| B_{\psi}(l) T^R_r \|_{\rho}^2  \leq \frac{1}{1-b_{R}(n)} \Vert \psi \Vert^2\,.
\ee
Inserting these into \eqref{psi_phi_CS} yields
\be
\left| \langle \varphi, \psi \rangle -  \langle C_{\phi}, D_{\psi} \rangle_{\cK_G} \right|  \leq
\frac{b(n)}{\sqrt{(1-b_{L}(n))(1-b_{R}(n))}} \Vert \phi\Vert \Vert \psi \Vert\,.
\ee
\end{proof}

Now, we are ready to estimate the quantity of interest in \eq{innerproductformula}, which is an inner product of the form considered in Lemma~\ref{lem:innerproducts} with the additional
information that $\phi$ and $\psi$ are both orthogonal to $\cG_G$.

\begin{proposition}\label{prop:epsilonn_estimate}
Under the assumptions stated in Section \ref{sec:assumptions} and with the notations introduced there, we have the following estimate for the quantity $\epsilon_n$
defined in \eq{innerproductformula}:
\be\label{epsilon_bound}
\epsilon_n  \leq \frac{b(n)}{\sqrt{1-b_{LR}(n)}} +  \left(\frac{b(n)}{\sqrt{1-b_{LR}(n)}}\right)^2\left( 1+  b_G(n)
\right),
\ee
with
\be
b(n) = a(n)\Tr \rho^{-1},\ a(n) = \left\| \E^n - \ketbra{\idty}{\rho} \right\|,
\ee
and
$b_{LR}(n)$ and $b_G(n)$ are defined in \eqref{left_right_bn} and \eqref{b_LR}.
\end{proposition}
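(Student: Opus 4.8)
The plan is to apply the inner-product estimates of Lemma~\ref{lem:innerproducts} together with a Gram--Schmidt / projection argument to control the angle between $\cG_{G_L-C_n}\otimes\cH_{G_R}$ and $\cH_{G_L}\otimes\cG_{C_n-G_R}$ after both spaces have been made orthogonal to the common subspace $\cG_G$. Recall from \eq{innerproductformula} that $\epsilon_n$ is the supremum of $|\braket{\phi}{\psi}|/(\|\phi\|\|\psi\|)$ over nonzero $\phi,\psi$ in those two spaces with $\phi,\psi\perp\cG_G$. Fix such a pair $\phi,\psi$, normalized so that $\|\phi\|=\|\psi\|=1$. By Lemma~\ref{lem:innerproducts}, $|\braket{\phi}{\psi}-\langle C_\phi,D_\psi\rangle_{\cK_G}|\leq b(n)/\sqrt{1-b_{LR}(n)}$, so the first term in \eq{epsilon_bound} will come for free once I show that the ``clean'' term $\langle C_\phi,D_\psi\rangle_{\cK_G}$ is itself small --- and this is where orthogonality to $\cG_G$ enters. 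The key observation is that $\cG_G=\ran\Gamma_G$, and $\Gamma_G(B)$ for $B\in\cK_G$ expands (cf.\ \eq{GammaG}) in exactly the same way as $\phi$ and $\psi$ do, with the ``$B$-matrix'' of $\Gamma_G(B)$ sitting in the same slot; so the conditions $\phi\perp\cG_G$ and $\psi\perp\cG_G$ should translate, via Lemma~\ref{lem:innerproducts} applied with one argument of the form $\Gamma_G(B)$, into the statement that $C_\phi$ and $D_\psi$ are each \emph{approximately} orthogonal (in $\langle\cdot,\cdot\rangle_{\cK_G}$) to all of $\cK_G$, i.e.\ approximately $0$.

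Concretely, I would proceed as follows. First, identify how $\langle\Gamma_G(B),\psi\rangle$ relates to an inner product of the form $\langle B, D_\psi\rangle_{\cK_G}$ (or a close variant) plus an error controlled by Lemma~\ref{lem:innerproducts} with $\Lambda=G$ and $\Lambda=C_n-G_R$; the point is that $\Gamma_G(B)$ is simultaneously of the form ``$\cG_{G_L-C_n}\otimes\cH_{G_R}$-vector'' (take $B_\phi(r)$ appropriately) and the full ground-state vector. Using $\psi\perp\cG_G$, i.e.\ $\langle\Gamma_G(B),\psi\rangle=0$ for all $B$, I would deduce $|\langle B,D_\psi\rangle_{\cK_G}|\leq \big(b(n)/\sqrt{1-b_{LR}(n)}\big)\|\Gamma_G(B)\|$ for all $B\in\cK_G$, and then invoke Corollary~\ref{Cor:bd2} (the lower bound $\|\Gamma_G(B)\|\leq\|B\|_G\sqrt{1+b_G(n)}$ direction) to get $\|D_\psi\|_G\leq \big(b(n)/\sqrt{1-b_{LR}(n)}\big)\sqrt{1+b_G(n)}$ after optimizing over $B$ (choosing $B$ dual to $D_\psi$ in the $G$-inner product). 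The same argument with the roles of $\phi$ and $\psi$ (and $C_\phi$ in place of $D_\psi$) interchanged gives the matching bound on $\|C_\phi\|_G$. Wait---I should be careful here: only one of the two clean norms needs to be small to kill the Cauchy--Schwarz product $|\langle C_\phi,D_\psi\rangle_{\cK_G}|\leq \|C_\phi\|_G\|D_\psi\|_G$, but to get the square in the second term of \eq{epsilon_bound} I want both factors bounded by $b(n)/\sqrt{1-b_{LR}(n)}$ times $\sqrt{1+b_G(n)}$ (roughly), so I do need to run the argument symmetrically and then Cauchy--Schwarz. Combining, $|\langle C_\phi,D_\psi\rangle_{\cK_G}|\leq \big(b(n)/\sqrt{1-b_{LR}(n)}\big)^2(1+b_G(n))$, and adding back the Lemma~\ref{lem:innerproducts} error term $b(n)/\sqrt{1-b_{LR}(n)}$ yields exactly \eq{epsilon_bound}. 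Taking the supremum over $\phi,\psi$ finishes the proof.

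The step I expect to be the main obstacle is the bookkeeping in the ``approximate orthogonality'' argument: making precise that $\phi\perp\Gamma_G(B)$ for all $B$ forces $\langle C_\phi, D_\psi\rangle_{\cK_G}$ to be controlled by $\|C_\phi\|_G$ and $\|D_\psi\|_G$ each small, rather than just their product. One has to be careful that the natural ``dual'' vector used to extract $\|D_\psi\|_G$ from the bound on $\langle B,D_\psi\rangle_{\cK_G}$ is itself a legitimate element of $\cK_G$ (it is, since $\cK_G=\cL(\Cx^{D_R},\Cx^{D_L})$ is finite-dimensional and $Q_L,Q_R$ are invertible, so the $G$-inner product is genuinely nondegenerate), and that the error terms from Lemma~\ref{lem:innerproducts} applied to the pair $(\Gamma_G(B),\psi)$ really are bounded by $b(n)/\sqrt{1-b_{LR}(n)}$ times the relevant norms and not something larger --- this requires checking that $\Gamma_G(B)$, viewed as an element of $\cG_{G_L-C_n}\otimes\cH_{G_R}$, has the ``clean'' $\cK_G$-representative equal to $B$ (up to the identification via $\rho$ and the $T^R_r$), so that Lemma~\ref{lem:innerproducts}'s conclusion specializes correctly. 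Everything else is Cauchy--Schwarz, the norm-equivalence Corollaries, and arithmetic.
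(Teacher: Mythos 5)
Your proposal is correct and follows essentially the same route as the paper's proof: orthogonality to $\cG_G$ is exploited by applying Lemma~\ref{lem:innerproducts} with one argument a ground state $\xi=\Gamma_G(X)$, whose representatives (by injectivity) are $C_\xi = XQ_R$ and $D_\xi = Q_LX$, which via Corollary~\ref{Cor:bd2} and a duality argument gives $\Vert C_\phi Q_R^{-1}\Vert_G,\ \Vert Q_L^{-1}D_\psi\Vert_G \leq \delta(n)\Vert\phi\Vert,\ \delta(n)\Vert\psi\Vert$ with $\delta(n)=\frac{b(n)}{\sqrt{1-b_{LR}(n)}}\sqrt{1+b_G(n)}$, followed by Cauchy--Schwarz in $\langle\cdot,\cdot\rangle_G$ through the identity $\langle C_\phi,D_\psi\rangle_{\cK_G}=\langle C_\phi Q_R^{-1},Q_L^{-1}D_\psi\rangle_G$ and the triangle inequality with the Lemma~\ref{lem:innerproducts} error. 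The only adjustment to your sketch is the bookkeeping you yourself flag: the quantities controlled are the $Q_R^{-1}$- and $Q_L^{-1}$-weighted norms rather than $\Vert C_\phi\Vert_G$ and $\Vert D_\psi\Vert_G$ directly, and with that correction your combination of terms reproduces \eqref{epsilon_bound} exactly.
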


\begin{proof}
Any $\xi\in \cG_G$ belongs to both $\cG_{G_L-C_n} \otimes \cH_{G_R}$ and $\cH_{G_L} \otimes \cG_{C_n-G_R}$. Therefore,
there are unique matrices $B_{\xi}(r)$ and $B_{\xi}(l)$ and corresponding expressions  \eq{phi} and \eq{psi} for $\xi$. Since $\xi\in \cG_G$, there also exists $X\in \cK_G$ such that $\xi=\Gamma_G(X)$. By injectivity it follows that
\be
B^L_\xi(l) = T^L_l X,\quad B^R_\xi(r) =  X T^R_r.
\ee
Inserting the first relation above into the expression for $D_\psi$ and the second into $C_\phi$ we find the following special form of these matrices for a ground state $\xi$:
\bea
C_\xi &=& \sum_{r} B_{\phi}(r) \rho (T^R_{r})^*=  \sum_{r} X T^R_r \rho (T^R_{r})^*
= XQ_R \label{Cphi_perpG}\\
D_\xi &=& \sum_{l} (T^L_{l})^* B_{\psi}(l)=  \sum_{l} (T^L_{l})^* T^L_l X = Q_LX.\label{Dpsi_perpG}
\eea
We use this to extract information from the orthogonality of $\phi$ and $\psi$ to $\cG_G$. Using $\langle \phi,\xi\rangle = \langle \xi,\psi\rangle=0$, from Lemma \ref{lem:innerproducts} we have that
for all $X\in \Cx^{D_R\times D_L}$:
\bea
\left| \langle C_\phi, Q_LX \rangle_{\cK_G} \right| &\leq& \frac{b(n)}{\sqrt{1-b_{LR}(n)}} \| \phi \| \| \xi \|\label{CphiDpsi1}\\
\left| \langle XQ_R, D_\psi \rangle_{\cK_G} \right| &\leq& \frac{b(n)}{\sqrt{1-b_{LR}(n)}} \| \xi\| \| \psi \|.\label{CphiDpsi2}
\eea
Applying Corollary~\ref{Cor:bd2} gives
\[
\Vert \xi\Vert^2 = \Vert \Gamma_G(X)\Vert^2 \leq
 \left(1+  b_G(n) \right)
\Vert X\Vert_{G}^2.
\]
Using this with \eqref{CphiDpsi1} and \eqref{CphiDpsi2} yields
\bea\label{CphiDpsi-b}
\left| \langle C_\phi, Q_LX \rangle_{\cK_G} \right| &\leq& \delta(n) \Vert \phi \Vert \Vert X\Vert_{G}\\
\left| \langle XQ_R, D_\psi \rangle_{\cK_G} \right| &\leq& \delta(n) \Vert \psi \Vert \Vert X\Vert_{G},
\eea
where
$$
\delta(n) = \frac{b(n)}{\sqrt{1-b_{LR}(n)}}  \sqrt{1+  b_G(n)}.
$$
The LHS of these inequalities can be expressed in terms of the inner product  $\langle \cdot,\cdot\rangle_{G}$  as follows:
\beann
\langle C_\phi, Q_L X \rangle_{\cK_G} &=& 
\Tr C_\phi^* Q_L X =  \Tr Q_R Q_R^{-1} C_\phi^* Q_L X = \langle C_\phi Q_R^{-1}, X\rangle_{G}\\
\langle XQ_R, D_\psi \rangle_{\cK_G} &=& \Tr Q_R X^*D_\psi = \Tr Q_R X^* Q_L Q_L^{-1} D_\psi = \langle X, Q^{-1}_L D_\psi\rangle_{G}.
\eeann
The estimates \eq{CphiDpsi-b} now become: for all $X$
\bea\label{CphiDpsi-c}
\left| \langle C_\phi Q_R^{-1}, X\rangle_{G}\right| &\leq& \delta(n) \Vert \phi \Vert \Vert X\Vert_{G}\\
\left|  \langle X, Q^{-1}_L D_\psi\rangle_{G} \right| &\leq& \delta(n) \Vert \psi \Vert \Vert X\Vert_{G},
\eea
which imply
\bea\label{CphiDpsi-d}
\Vert C_\phi Q_R^{-1}\Vert_{G} &\leq& \delta(n) \Vert \phi \Vert \\
\Vert Q^{-1}_L D_\psi\Vert_{G} &\leq& \delta(n) \Vert \psi \Vert .
\eea
Noting the identity
$$
\langle C_\phi, D_\psi\rangle_{\cK_G} = \Tr Q_R (Q_R^{-1} C^*_\phi) Q_L (Q_L^{-1} D_\psi) = \langle  C_\phi Q_R^{-1}, Q^{-1}_L D_\psi\rangle_{G},
$$
we have
$$
\vert\langle C_\phi, D_\psi\rangle_{\cK_G} \vert \leq \Vert C_\phi Q_R^{-1}\Vert_{G}\Vert Q^{-1}_L D_\psi\Vert_{G}\leq  \delta(n)^2 \Vert \phi\Vert \Vert\psi\Vert.
$$
Moreover, it follows from \eqref{psi_phi_IP} that
\be
|\langle \phi,\psi\rangle| \leq  \vert\langle C_\phi, D_\psi\rangle_{\cK_G} \vert + \frac{b(n)}{\sqrt{1-b_{LR}(n)}} \Vert \phi\Vert \Vert\psi\Vert.
\ee
Combining the last two inequalities we obtain the final estimate
\be\label{innerproduct_bound}
|\langle \phi,\psi\rangle| \leq \left[ \frac{b(n)}{\sqrt{1-b_{LR}(n)}} +  \left(\frac{b(n)}{\sqrt{1-b_{LR}(n)}}\right)^2\left( 1+  b_G(n)\right) \right] \Vert \phi\Vert \Vert\psi\Vert.
\ee
\end{proof}

In the next section we verify the assumptions stated in this section for the AKLT model on the decorated honeycomb lattice with $n\geq 2$ and apply Proposition
\ref{prop:epsilonn_estimate} to show that for this model we have $\epsilon_n < 1/3$, for all $n\geq 3$.

\section{Gap of the decorated AKLT model}\label{sec:AKLT_Est}

In this section we prove Proposition~\ref{prop:epsilonn} and Theorem~\ref{thm:DecSpecGap} by applying the results of Section~\ref{sec:bound} to the decorated AKLT model discussed in Section~\ref{sec:models}. In this case, the graph $G$ is given by $Y_{v} \cup Y_{w}$ for two adjacent sites $v$ and $w$ in $\Gamma$. We decompose $G$ as $G = G_L - C_n - G_R$, where $C_n = Y_v \cap Y_w$, $G_L = Y_v \setminus C_n$ and $G_R = Y_w \setminus C_n$. The VBS (Valence Bond Solid) or PEPS (Product of Entangled Pairs) 
ground states on $G$ are depicted in Figure \ref{fig:VBSAKLT}. The corresponding Hilbert spaces are give by
\[
\cH_{G_L} = \cH_{G_R}= (\Cx^3\otimes \Cx^3)^{\otimes n}\otimes \Cx^4, \quad \text{and} \quad \cH_{C_n} = (\Cx^3)^{\otimes n}.
\]
For $s \in \{1, \, 3/2\}$, we use $\cB_s=\{\ket{s}, \, \ket{s-1} \, \ldots, \, \ket{-s}\}$ to denote an orthonormal basis of $\Cx^{2s+1}$ consisting of eigenvectors of the third component of spin associated to the spin-$s$ irreducible representation of $SU(2)$. We begin by discussing the MPS $\Gamma_{C_n}$ and its associated transfer operator $\E$. We then define the operator $\E_L$ associated with $G_L$ and prove that $\Gamma_{G_L-C_n}$ is injective for $n\geq 2$, after which we prove the main results.

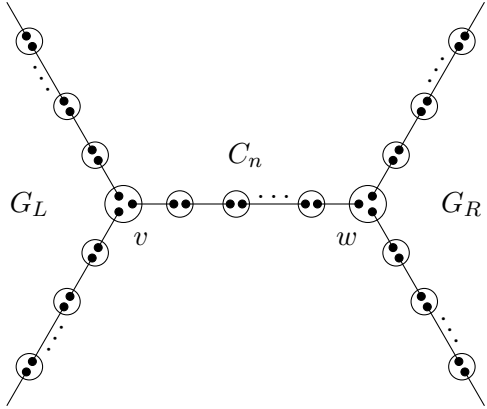
\begin{figure}[t]
	\begin{tikzpicture}
	\node[yshift = .65cm] at (0:1.625){$C_n$};
	\node at (0:-1.25){$G_L$};
	\node at (0:4.5){$G_R$};
	\draw (0:0) circle (7pt) node [below right, yshift=-.25cm] {$v$};
	\draw (0:3.25) circle (7pt) node [below left, yshift=-.25cm] {$w$};
	\foreach \n in {0,120,240}{
		\draw[fill=black](\n:.12) circle (1.5pt);
		\node[rotate = \n, yshift=3pt] at (\n:2) {$\ldots$};
		\draw(\n:.1)--(\n:.7);
		\draw(\n:.8)--(\n:1.4);
		\draw(\n:1.6)--(\n:2.4);
		\draw(\n:2.6)--(\n:3.1);
		\foreach \m in {.75,1.5,2.5} {
		\draw (\n:\m) circle (5pt);
		\draw[fill=black] (\n:{\m-.08}) circle (1.5pt);
	\draw[fill=black] (\n:{\m+.08}) circle (1.5pt);}
	}
\draw[fill=black](0:3.13) circle (1.5pt);
	\foreach \n in {60,-60}
	{\draw[fill=black,xshift=3.25cm](\n:.12) circle (1.5pt);
		\draw[xshift=3.25cm](\n:.1)--(\n:.7);
		\draw[xshift=3.25cm](\n:.8)--(\n:1.4);
		\draw[xshift=3.25cm](\n:1.6)--(\n:2.4);
		\draw[xshift=3.25cm](\n:2.6)--(\n:3.1);	
		\node[xshift=3.25cm, rotate = \n, yshift=3pt] at (\n:2) {$\ldots$};
		\foreach \m in {.75, 1.5, 2.5}{
	\draw[xshift=3.25cm] (\n:\m) circle (5pt);
	\draw[fill=black,xshift=3.25cm] (\n:{\m-.08}) circle (1.5pt);
	\draw[fill=black,xshift=3.25cm] (\n:{\m+.08}) circle (1.5pt);
}
	}
	\end{tikzpicture}
	\caption{The VBS picture for the decorated AKLT model.}
	\label{fig:VBSAKLT}
\end{figure}

On $C_n$ we have the one-dimensional AKLT spin chain, which has the bond dimension $D=2$. For this model, every physical spin-1 vertex is identified with the symmetric subspace of the virtual space $\Cx^2\otimes \Cx^2$. We will denote by $S^X, S^Y, S^Z$ the usual spin-1/2 operators, and by $S^{\pm}$ the corresponding lowering and raising operators. To differentiate between the physical and virtual spins, we will use $\ket{\ua},\ket{\da}\in\Cx^2$ to denote the standard orthonormal eigenbasis of $S^Z$ rather than $\ket{\pm 1/2}$. The intertwiner $P^{(1)}_{\text{sym}}:\Cx^2\otimes \Cx^2 \to \Cx^3$ that maps between the virtual and physical space of a site $v$ is given by:
\[
P^{(1)}_{\text{sym}} = \ketbra{1}{\ua\ua} + \ketbra{0}{\psi^+} + \ketbra{-1}{\da\da},\]
where $\ket{\psi^+}=\frac{1}{\sqrt{2}}(\ket{\ua\da}+\ket{\da\ua})$.

Recall that the symmetric subspace of $\Cx^2\otimes \Cx^2$ can be encoded into the MPS matrices
\[
P^{(1)}_{1} = \ketbra{\ua}{\ua}, \quad P^{(1)}_{0} = \sqrt{2} S^X, \quad P^{(1)}_{-1} = \ketbra{\da}{\da}.
\]
The ground states of the one-dimensional AKLT model can then be described as a valence-bond solid state obtained from projecting each (virtual) edge of the graph into the singlet states $\ket{\psi^-}=\frac{1}{\sqrt{2}}(\ket{\ua\da}-\ket{\da\ua})$; this is represented by the MPS matrix
\[ K = \frac{1}{\sqrt{2}}\left( \ketbra{\ua}{\da} - \ketbra{\da}{\ua} \right) = \sqrt{2}i S^Y.\]
With these matrices, and with a convenient choice of normalization, the ground state space of the one-dimensional AKLT matrix is given by
\[
\Gamma_{C_n}(B) = \sum_{i_1, \ldots, \, i_n \in \{\pm1,0\}} \Tr[BV_{i_n}\ldots V_{i_1}] \ket{i_1\ldots i_n}
\]
where $V_i = \frac{2}{\sqrt{3}} K P^{(1)}_i$. Explicitly,
\begin{equation}\label{1d-aklt-mps}
V_{1} = -\sqrt{\frac{2}{3}} S^{+},
\quad V_0 = \frac{2}{\sqrt{3}}S^{Z},
\quad V_{-1} = \sqrt{\frac{2}{3}} S^{-}.
\end{equation}
Given the form of $\Gamma_{C_n}$, the choice of multiplying on the \emph{left} by $K$ in the definition of $V_{i}$ corresponds to projecting the edge to the \emph{right} of the associated site into the singlet state. This convention will also be used to define the tensors $T_\ell^L$ and $T_r^R$. For more details on this and other MPS constructions, see \cite{S11, PVWC07, O14}.

Noting that $S^+ B S^- + S^+ B S^- = 2(S^XBS^X + S^Y B S^Y)$, the transfer operator $\E:M_2\to M_2$ associated with $\Gamma_{C_n}$ takes the form
\begin{equation}
\E(B) = \sum_{i\in \{\pm 1, \, 0\}}V_i^*BV_i = \frac{4}{3}(S^X B S^X + S^Y B S^Y + S^Z B S^Z),
\end{equation}
which can be easily diagonalized as
\begin{equation}\label{hE}
\E = \ketbra{\idty}{\rho} - \frac{2}{3}\sum_{U\in\{X,Y,Z\}} \ketbra{S^U}{S^U},
\end{equation}
where $\rho = \idty/2$ is the maximally mixed state. This allows to easily compute
\[
\E^n = \ketbra{\idty}{\rho} + 2\frac{(-1)^n}{3^n} \sum_{U\in\{X,Y,Z\}} \ketbra{S^U}{S^U},
\]
from which \eqref{an} takes the explicit form
\[
a(n) = \left\| \E^n - \ketbra{\idty}{\rho} \right\| =
3^{-n},
\]
and $b(n) = \Tr(\rho^{-1}) a(n) = 4\cdot 3^{-n}$. By Corollary \ref{Cor:bd1}, this implies that $\Gamma_{C_n}$ is injective when $n \ge 2$. It can easily be shown that it is not injective for $n=1$.

We now consider $G_L$ and $G_R$. For the decorated AKLT model, $D_L = D_R = 4$ and so $\cK_G=M_4$. We first construct the operator $\E_L$ associated with $G_L$, and use this to prove $\Gamma_{G_L-C_n}$ is injective of $n\geq2$. The analogous operator $\E_R$ for $G_R$ and the injectivity of of $\Gamma_{C_n-G_R}$ follow from similar calculations.

We first note that $G_L$  can be written as $[u^1_1,u^1_n]\times [u^2_1,u^2_n]\times\{v\}$, where the sites $u^i_k$ correspond to the $2n$ spin-1's, and $v$ is the spin-$3/2$. By grouping the sites $u_i^1$ and $u_i^2$ into a single site $(u_i^1,u_i^2)$, we can recognize the ground states of $H_{G_L}$ as a PEPS. We choose the product basis for $\cH_{G_L}$ given by
\[
\ket{i_1,j_1}\otimes\dots\otimes\ket{i_n,j_n}\otimes \ket{k} \quad i_1,\dots, i_n, j_1,\dots, j_n \in \{ \pm 1,0 \}, \; k \in \left\{ \pm \frac 32, \pm \frac 12\right \}.
\]

For each element $\ket{l}_L=\ket{i_1,j_1}\otimes \dots \otimes \ket{i_n,j_n}\otimes \ket{k}$ of the basis, the $2\times 4$ matrix $T^L_l$ is given by
\[
T^L_l = W_k^L V_{i_n} \otimes V_{j_n} \cdots V_{i_1} \otimes V_{j_1},
\]
where the $V_i$ are as defined in \eqref{1d-aklt-mps}, and the $W_k^L \in \cL(\Cx^4,\Cx^2)$ are given by the PEPS representation of the AKLT on the hexagonal lattice, which we now define. Analogous to the spin-1 case, the virtual space of a spin-3/2 particle is the symmetric subspace of three spin-1/2 particles, and so the intertwiner $P^{(3/2)}_{\text{sym}}:\Cx^2\otimes \Cx^2 \otimes \Cx^2 \to \Cx^4$  between the virtual and physical space is given by
\[
P^{(3/2)}_{\text{sym}} = \ketbra{3/2}{\ua\ua\ua} + \ketbra{1/2}{\phi^+} + \ketbra{-1/2}{\phi^-} + \ketbra{-3/2}{\da\da\da},
\]
where
\bea
\ket{\phi^+} &= \frac{1}{\sqrt{3}}(\ket{\ua\ua\da}+\ket{\ua\da\ua}+\ket{\da\ua\ua})
= \frac{1}{\sqrt{3}}\ket{\da}\ket{\ua\ua} + \sqrt{\frac{2}{3}}\ket{\ua}\ket{\psi^+}
, \\
\ket{\phi^-} &= \frac{1}{\sqrt{3}}(\ket{\ua\da\da}+\ket{\da\ua\da}+\ket{\da\da\ua})=
\frac{1}{\sqrt{3}}\ket{\ua}\ket{\da\da} + \sqrt{\frac{2}{3}}\ket{\da}\ket{\psi^+}
.
\eea
By grouping two virtual edges to the left of $v$, see Figure~\ref{fig:VBSAKLT}, the virtual space can be represented by the MPS matrices $P_{k}^{(3/2)}\in M_{2\times 4}$ defined by
\beann
P^{(3/2)}_{3/2} = \ketbra{\ua}{\ua\ua}, \quad & P^{(3/2)}_{1/2} = \frac{1}{\sqrt{3}}\ketbra{\da}{\ua\ua} + \sqrt{\frac{2}{3}}\ketbra{\ua}{\psi^+} \\
P^{(3/2)}_{-3/2} = \ketbra{\da}{\da\da}, \quad & P^{(3/2)}_{-1/2} = \frac{1}{\sqrt{3}}\ketbra{\ua}{\da\da} + \sqrt{\frac{2}{3}}\ketbra{\da}{\psi^+} .
\eeann
Once again projecting edges on the \emph{right} of $v$ into a singlet state (and choosing a convenient normalization) we define $W_k^L = \sqrt{2}K P_{k}^{(3/2)}$. Explicitly,
\bea
W_{3/2}^L = -\ketbra{\da}{\ua\ua},
&
W_{1/2}^L = \phantom{-} \frac{1}{\sqrt{3}}\ketbra{\ua}{\ua\ua} - \sqrt{\frac{2}{3}} \ketbra{\da}{\psi^+},
\\
W_{-3/2}^L = \ketbra{\ua}{\da\da},
&
W_{-1/2}^L = -\frac{1}{\sqrt{3}} \ketbra{\da}{\da\da} + \sqrt{\frac{2}{3}}\ketbra{\ua}{\psi^+},
\eea
which satisfies $\sum_{i=1}^4 W_i^L(W_i^L)^* = 2\idty_{\Cx^2}$. For $B\in M_2$, define $\E^{\Yright}(B) = \sum_{i} (W_i^L)^* B W_i^L$. While this is a completely positive map from $M_2$ to $M_4$, unlike the MPS case it is not unital, since
\[\E^{\Yright}(\idty) = \sum_{i=1}^4 (W_i^L)^*W_i^L = \frac{4}{3}(\ketbra{\ua\ua}{\ua\ua} + \ketbra{\da\da}{\da\da} + \ketbra{\psi^+}{\psi^+}) =  \idty + \frac{4}{3} \mathbf{S}\cdot \mathbf{S},\]
where as usual $\mathbf{S}=(S^X,S^Y,S^Z)$ and $\mathbf{S}\cdot\mathbf{S} = S^X\otimes S^X + S^Y\otimes S^Y + S^Z\otimes S^Z$. By direct calculation, we see that
\be \label{EYRight}
(\E^{\Yright})^*(B) =
c(B) \rho + \sum_{U\in \{X,Y,Z\}} c_U(B) S^U,
\ee
where
\beann
c(B) & = & \frac{4}{3}(\expval{\ua\ua}{B}{\ua\ua} + \expval{\da\da}{B}{\da\da} + \expval{\psi^+}{B}{\psi^+}) \\
c_X(B) & = &  -\frac{2\sqrt{2}}{3} {\rm Re}[\expval{\psi^+}{B}{\ua\ua} +  \expval{\da\da}{B}{\psi^+} ]\\
c_Y(B) & = & -\frac{2\sqrt{2}}{3} {\rm Im}[\expval{\psi^+}{B}{\ua\ua} +  \expval{\da\da}{B}{\psi^+}]\\
c_Z(B) & = & \frac{2}{3} (\expval{\da\da}{B}{\da\da}-\expval{\ua\ua}{B}{\ua\ua}).
\eeann
It can easily be checked that $(\E^{\Yright})^t\circ \tau = (\E^{\Yright})^t$ where $\tau:M_4\to M_4$ is the transposition operator
\[
\tau(A\otimes B) = B\otimes A
\]
Combining this with \eqref{EYRight} allows us to verify that
\begin{equation*}
(\E^{\Yright})^t(\rho \otimes \rho) = \rho, \quad
(\E^{\Yright})^t(S^U \otimes S^{U'}) = \delta_{U,U'} \frac{1}{3} \rho, \quad
(\E^{\Yright})^t(\rho \otimes S^U) = (\E^{\Yright})^t(S^U \otimes \rho) = -\frac{1}{3} S^U,
\end{equation*}
or equivalently
\begin{equation}
\E^{\Yright} = \ketbra{\idty\otimes \idty}{\rho} + \frac{4}{3}
\ket{\mathbf{S}\cdot \mathbf{S}}\bra{\rho}
- \frac{4}{3} \sum_{U\in\{X,Y,Z\}} \left(\ket{S^U \otimes \idty}+\ket{\idty\otimes S^U} \right)\bra{S^U}.
\end{equation}

To simplify notation we define
\[
\ket{\Omega^U} = \ket{S^U \otimes \idty}+\ket{\idty\otimes S^U}\  \forall U \in \{X,Y,Z\},
\]
and notice that
\[
\norm{\mathbf{S}\cdot \mathbf{S}}_2=\frac{\sqrt{3}}{2}, \quad \norm{\Omega^U}_2=\sqrt{2}, \quad \text{and} \quad \braket{\mathbf{S}\cdot \mathbf{S}}{\Omega^U}_2 = 0
\]
for all $U$. The transfer matrix for $G_L$, defined in \eqref{transfer-matrix-L-R}, is then given by $\E_L = (\E^n \otimes \E^n) \circ \E^{\Yright}$, which can be simplified to
\bea
\E_L & = & \left(\ketbra{\idty}{\rho} + 2\frac{(-1)^n}{3^n} \sum_{U} \ketbra{S^U}{S^U} \right) \otimes \left(\ketbra{\idty}{\rho} + 2\frac{(-1)^{n}}{3^n} \sum_{U} \ketbra{S^U}{S^U} \right) \E^{\Yright} \nonumber\\
& = &
\ketbra{\idty\otimes\idty}{\rho}
+ 2\frac{(-1)^{n+1}}{3^{n+1}} \sum_{U} \ketbra{\Omega^U}{S^U}
+ \frac{4}{3^{2n+1}} \ketbra{\mathbf{S}\cdot \mathbf{S}}{\rho}.
\eea
Using this decomposition to compute $Q_L$ gives
\begin{equation}
Q_L = \idty + \frac{4}{3^{2n+1}} \mathbf{S}\cdot \mathbf{S}, \implies \spec(Q_L) = \left\{1-\frac{1}{3^{2n}}, \, 1+\frac{1}{3^{2n+1}}\right\}.
\end{equation}
Therefore, $q_L = 1-\frac{1}{3^{2n}}$, and moreover, since $\E_L$ is a completely positive map,
\[
\|\E_L\| = \|Q_L\| = 1+\frac{1}{3^{2n+1}}.
\]
Since $Q_L$ is invertible, the theory of Section~\ref{sec:bound} applies and we can use the above relations to prove the following result.
\begin{lemma}\label{AKLT_GL_inj}
$\Gamma_{G_L-C_n}$ is injective for $n\geq 2$.
\end{lemma}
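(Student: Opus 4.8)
The plan is to deduce this directly from Corollary~\ref{Cor:bd2}. Applied with $\Lambda = G_L - C_n$, that corollary gives
\[
\|B\|_{G_L-C_n}\sqrt{1 - b_L(n)} \;\leq\; \|\Gamma_{G_L-C_n}(B)\| \;\leq\; \|B\|_{G_L-C_n}\sqrt{1 + b_L(n)}
\]
for all $B \in \cK_{G_L-C_n}$, provided $b_L(n) \leq 1$, where $b_L(n) = b_{G_L-C_n}(n) = \frac{a(n)D^2}{\rho_{\rm min}\,q_L}\,\|\E_L\|$. Hence it is enough to verify the \emph{strict} inequality $b_L(n) < 1$ for $n \geq 2$: the left-hand bound then has a strictly positive prefactor, and since $Q_L = \idty + \frac{4}{3^{2n+1}}\mathbf{S}\cdot\mathbf{S}$ is positive definite (its spectrum being $\{1 - 3^{-2n},\,1 + 3^{-2n-1}\}$, as computed above), $\langle\cdot,\cdot\rangle_{G_L-C_n}$ is a genuine inner product on $\cK_{G_L-C_n}$, so $\Gamma_{G_L-C_n}(B) = 0$ forces $\|B\|_{G_L-C_n} = 0$ and therefore $B = 0$.

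The first step is to collect the constants already computed in this section: $a(n) = 3^{-n}$, the bond dimension of the AKLT-chain MPS is $D = 2$ (so $D^2 = 4$), $\rho = \idty/2$ so $\rho_{\rm min} = 1/2$, the minimal eigenvalue of $Q_L$ is $q_L = 1 - 3^{-2n}$, and $\|\E_L\| = \|Q_L\| = 1 + 3^{-2n-1}$ (using that $\E_L$ is completely positive, so its norm is attained at $\idty$). Substituting these into the definition of $b_L(n)$ yields the closed form
\[
b_L(n) = \frac{4\cdot 3^{-n}}{(1/2)(1 - 3^{-2n})}\,(1 + 3^{-2n-1}) = \frac{8\,(1 + 3^{-2n-1})}{3^{n}\,(1 - 3^{-2n})}.
\]

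The second step is the elementary estimate $b_L(n) < 1$ for all $n \geq 2$. The denominator equals $3^{n}(1 - 3^{-2n}) = 3^{n} - 3^{-n}$, which is strictly increasing in $n$, while the numerator $8(1 + 3^{-2n-1})$ is strictly decreasing; hence $b_L(n)$ is strictly decreasing, and it suffices to bound it at $n = 2$:
\[
b_L(n) \leq b_L(2) = \frac{8\,(1 + 3^{-5})}{9\,(1 - 3^{-4})} = \frac{8\cdot 244}{2160} = \frac{1952}{2160} < 1 .
\]
This proves the lemma; the injectivity of $\Gamma_{C_n-G_R}$ follows verbatim from the symmetric computation of $\E_R$, $Q_R$, $q_R$, and $\|\E_R\|$.

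There is no serious obstacle here: the argument is a substitution into Corollary~\ref{Cor:bd2} followed by a monotone estimate of a rational function of $3^{-n}$. The only two points requiring a moment's care are that the inequality $b_L(n) < 1$ must be strict — it is the strict positivity of $\sqrt{1 - b_L(n)}$ that actually delivers injectivity rather than merely a one-sided norm comparison — and that $\|\cdot\|_{G_L-C_n}$ is a bona fide norm on $\cK_{G_L-C_n}$, which is precisely the non-singularity of $Q_L$ just established via $\spec(Q_L) = \{1 - 3^{-2n},\,1 + 3^{-2n-1}\}$.
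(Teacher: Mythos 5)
Your proposal is correct and follows essentially the same route as the paper: apply Corollary~\ref{Cor:bd2} with $\Lambda = G_L-C_n$, insert the explicit values $a(n)=3^{-n}$, $D=2$, $\rho_{\rm min}=1/2$, $q_L=1-3^{-2n}$, $\|\E_L\|=1+3^{-2n-1}$ to get $b_L(n)=\frac{8(1+3^{-2n-1})}{3^n(1-3^{-2n})}$, and check that this is strictly less than $1$ for $n\geq 2$. Your added remarks (the strictness of $b_L(n)<1$, the positive-definiteness of $Q_L$ making $\|\cdot\|_{G_L-C_n}$ a genuine norm, and the monotonicity-plus-evaluation-at-$n=2$ check) only make explicit what the paper leaves implicit.
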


\begin{proof} Let $B\in M_D$ and consider $\Gamma_{G_L-C_n}(B).$ Applying Corollary~\ref{Cor:bd2} gives
\[
    \|\Gamma_{G_L-C_n}(B)\|^2 \geq \left(1-b_L(n)\right) \|B\|_{G_L-C_n}^2
\]
with $b_L(n) = \frac{4 a(n)\|\E_L\|}{\rho_{\rm min}q_L}$.
By inserting the values of $a(n)$, $\|\E_L\|$, $\rho_{\min}$, and $q_L$  into this expression, one finds that
	\[
		1-b_L(n) = 1-\frac{8(1+3^{-2n-1})}{3^n(1-3^{-2n})}.
	\]
	This quantity is strictly positive for any $n\geq 2$ from which it follows that $\Gamma_{G_L-C_n}$ is injective.
\end{proof}

We now consider $G_R$. The operator $\E_R$ is obtained using a similar construction as $\E_L$. As with $G_L$, we can once again can group the spin-1 particles into pairs and to construct an orthonormal basis
\[
\ket{r} = \ket{i_1,j_1}\otimes\dots\otimes\ket{i_n,j_n}\otimes \ket{k} \quad i_1,\dots, i_n, j_1,\dots, j_n \in \{ \pm 1,0 \}, \; k \in \left\{ \pm \frac 32, \pm \frac 12\right \},
\]
for which the corresponding tensor is given by
\be
T_r^R = V_{i_n} \otimes V_{j_n} \cdots V_{i_1} \otimes V_{j_1}  W_k^R,
\ee
where $V_{1}, \, V_0$ and $V_{-1}$ are as before, and $W_{k}^R = 2 K\otimes K(P_k^{(3/2)})^*$. Explicitly,
\bea
W_{3/2}^R = \ketbra{\da\da}{\ua},
&
W_{1/2}^R = \frac{1}{\sqrt{3}} \ketbra{\da\da}{\da} - \sqrt{\frac{2}{3}}\ketbra{\psi^+}{\ua},
\\
W_{-3/2}^R = \ketbra{\ua\ua}{\da},
&
W_{-1/2}^R =  \frac{1}{\sqrt{3}}\ketbra{\ua\ua}{\ua} - \sqrt{\frac{2}{3}} \ketbra{\psi^+}{\da}.
\eea
Similar to the case of $\E^L$, we have $\E^{R} := \E^{\Yleft}\circ (\E^n\otimes\E^n)$ where $\E^{\Yleft}:M_4\to M_2$ is defined by
\be \label{LRTranspose}
\E^{\Yleft}(B) := \sum_{i} (W_i^R)^* B W_i^R = \sum_{i} W_i^L B (W_i^L)^* = (\E^{\Yright})^t(B).
\ee
The final equality above follows from recognizing
\bea
W_{3/2}^R & = \phantom{-}(W_{-3/2}^L)^*, \quad W_{1/2}^R = -(W_{-1/2}^L)^*\\
W_{-3/2}^R& = -(W_{3/2}^L)^*, \quad W_{-1/2}^R = \phantom{-}(W_{1/2}^L)^*.
\eea
It follows from the analogous arguments as used in Lemma~\ref{AKLT_GL_inj} above that $\Gamma_{C_n-G_R}$ is also injective for $n\geq 2$. We can now prove Proposition~\ref{prop:epsilonn}, and Theorem~\ref{thm:DecSpecGap}.

\begin{proof}[Proof of Proposition~\ref{prop:epsilonn} and Theorem~\ref{thm:DecSpecGap}]
Since $\E=\E^t$, from \eqref{LRTranspose} it follows that
\[
\E_R = [(\E^n\otimes\E^n)\circ\E^{\Yright}]^t = (\E_L)^t.
\]
Therefore, $\|\E_R\| =\|\E_L\|$ and
\be
Q_R := (\E_R)^t(\rho) = \E_L(\rho) = \frac{1}{2}Q_L.
\ee
As a consequence,
\[
q_R = \frac{1}{2}q_L = \rho_{\min} q_L.
\]
Using \eqref{epsilon_bound} to estimate \eqref{innerproductformula}, we find
\be
\label{AKLT_ep_bound}
\epsilon_n
\leq  \frac{4a(n)}{\sqrt{1-b_{LR}(n)}} + \left(\frac{4a(n)}{\sqrt{1-b_{LR}(n)}}\right)^2\left(1+ b_G(n)\right).
\ee
From \eqref{left_right_bn} and the values above, it is clear that $b_L(n) = b_R(n)$, and so
\bea
1-b_{LR}(n) = (1-b_L(n))(1-b_R(n)) &= \left( 1- \frac{8(1+3^{-2n-1})}{3^n(1-3^{-2n})}\right)^2, \nonumber\\
1+b_G(n) &= \left(1+ \frac{8(1+3^{-2n-1})^2}{3^n(1-3^{-2n})^2}\right).
\eea
This establishes Proposition~\ref{prop:epsilonn}.

Inserting these into \eqref{AKLT_ep_bound}, we find that $\epsilon_n<\frac{1}{3}$ whenever $n\geq 3$. By \eqref{gap_cond} this implies that the decorated AKLT model has a positive spectral gap above the ground state energy for $n\geq 3$. This completes the proof of Theorem~\ref{thm:DecSpecGap}.
\end{proof}
\section{Discussion}

We proved an explicit positive lower bound for the spectral gap above the ground state of the AKLT model on the decorated honeycomb lattice for $n\geq 3$, where $n$ is the number of
vertices inserted on each edge of the honeycomb lattice. It is natural to ask whether the approach of this paper could be used to prove that the AKLT model on the honeycomb lattice itself
($n=0$) is gapped too, which is expected. It is clear to us, however, that significant changes to the arguments would be necessary to achieve this. For example, a numerical calculation
shows that $\epsilon_1 \sim .478 > 1/3$. Therefore, our method does not work for $n=1$. For the case $n=2$, we do not have a good estimate of $\epsilon_2$, but it is conceivable that our approach could be extended to the case $n=2$. For the model with $n=3$, however, we proved a positive lower bound. By using a numerically calculated value for the gap for the small system on $Y$, which appears in \eq{comparable}, ($\gamma_Y\sim 0.2966$),  and the rigorous estimate showing $\epsilon_3 < 0.2683$ (Proposition \ref{prop:epsilonn}), 
we found the following uniform lower bound for the gap: $\gamma  > 0.0289$.

About generalizations to frustration-free models on other decorated lattices on the other hand, we can be rather optimistic. For example, we expect that similar arguments will work to
study the spectral of AKLT models on decorated hypercubic lattices of any dimension. One could also try to apply our approach to some of the more exotic hybrid valence bond models discussed in
\cite{WHR14}.

For physical reasons, one wants the spectral gap to be robust under small perturbations of the interactions. It seems very likely that the AKLT models on the decorated honeycomb lattices (and likely also on the honeycomb lattice itself) satisfy the Local Topological Quantum Order condition introduced by Bravyi, Hastings, and Michalakis \cite{BHM10}. If so, the stability theorem of Michalakis and Zwolak \cite{MZ13} would apply to the AKLT models on decorated lattices and provide the desired robustness of the spectral gap.

\section*{Acknowledgments}

This work arose from discussions during the follow-up workshop on \emph{Gapped Ground State Phases of Quantum Many-Body Systems} to the \emph{2018 Arizona School of Analysis and Mathematical Physics}, organized by Robert Sims and two of the authors (H.~A. and A.~Y.) 
and supported by NSF Grant DMS-1800724.
A.~L. acknowledges support from the Walter  Burke  Institute  for Theoretical Physics in the form of the Sherman Fairchild Fellowship as well as support from the Institute for Quantum  Information  and  Matter  (IQIM),  an  NSF  Physics Frontiers Center (NFS Grant PHY-1733907).
B.~N. acknowledges support by the National Science Foundation under Grant DMS-1813149 and a CRM-Simons
Professorship for a stay at the Centre de Recherches Math\'ematiques (Montr\'eal) during Fall 2018, where part of this work was carried out.

\end{document}